\documentclass{baustms}

\citesort

\theoremstyle{cupthm}
\newtheorem{thm}{Theorem}[section]

\newtheorem{cor}[thm]{Corollary}
\newtheorem{lem}[thm]{Lemma}
\theoremstyle{cupdefn}

\theoremstyle{cuprem}
\newtheorem{rem}[thm]{Remark}
\numberwithin{equation}{section}
\newtheorem{conj}[thm]{Conjecture}

\newcommand{\FF}{{\mathbb{F}}}

\newcommand{\seq}{\subseteq}
\newcommand{\ZZ}{\mathbb{Z}}

\newcommand{\cv}{\mathbf{c}}

\newcommand{\xv}{\mathbf{x}}
\newcommand{\yv}{\mathbf{y}}

\def\Item$#1${\item $\displaystyle#1$
   \hfill\refstepcounter{equation}(\theequation)}

\begin{document}
\runningtitle{New optimal linear codes over $\ZZ_4$}
\title{New optimal linear codes over $\ZZ_4$}
\author[1]{HOPEIN CHRISTOFEN TANG}
\address[1]{Combinatorial Mathematics Research Group,
	Facul-ty of Mathematics and Natural Sciences,
	Institut Teknologi Bandung,
	Jl. Ganesha 10, Bandung, 40132,
	INDONESIA\email{hopeinct@students.itb.ac.id}}
\cauthor
\author[2]{DJOKO SUPRIJANTO}
\address[2]{Combinatorial Mathematics Research Group,
	Faculty of Mathematics and Natural Sciences,
	Institut Teknologi Bandung,
	Jl. Ganesha 10, Bandung, 40132,
	INDONESIA\email{djoko.suprijanto@itb.ac.id}}

\authorheadline{H. C. Tang and D. Suprijanto}


\support{This research is supported by Institut Teknologi Bandung (ITB) and the Ministry of Education, Culture, Research and Technology
(\emph{Kementerian Pendidikan, Kebudayaan, Riset dan Teknologi (Kemdikbud-ristek)}),
Republic of Indonesia.}

\begin{abstract}
In this work, we introduce new methods to construct new linear codes over $\ZZ_4$ from the known ones. Many of these new codes are optimal. We obtained all optimal codes for $k_1=2,~k_2=0$ and many optimal codes for $k_1=3,~k_2=0.$  
\end{abstract}

\classification{primary 94B05; secondary 94B65}
\keywords{$\ZZ_4$-linear codes, Plotkin bound, optimal codes}

\maketitle

\section{Introduction}



In early development, algebraic coding theory considered finite fields as alphabet codes. Codes over finite rings were introduced in the first half of the 70s decade by Blake \cite{Blake1, Blake2}.
He \cite{Blake1} showed how to construct codes over $\ZZ_m$ from cyclic codes
over $\FF_p,$ where $p$ is a prime factor of $m.$ He \cite{Blake2} then further observed
the structure of codes over $\ZZ_{p^r}.$ Spiegel \cite{Spiegel1, Spiegel2} generalized Blake's results
to codes over $\ZZ_m,$ where $m$ is an arbitrary positive integer.
The study of codes over finite rings attracted great interest in algebraic coding theory
through the work of Hammons, Kumar, Calderbank, Sloane, and Sol\'{e} \cite{Hammons1994},
where they show how several well-known families of nonlinear binary codes were
intimately related to linear codes over $\ZZ_4.$ Since Hammons et al. \cite{Hammons1994},
many people have been considering numerous aspects of codes over various finite rings.

Pless and Qian \cite{Pless2} considered cyclic, and particularly quadratic residue codes over $\ZZ_4.$ Cyclic, negacyclic, and also the mass formula for the codes over $\ZZ_4$ have been explored, e.g., by Abualrub and Oehmke \cite{Abualrub2003}, Blackford \cite{Blackford2003a, Blackford2003b}, 
to name a few.

Other aspects have also been investigated by many authors. For example, in 1997, Bonnecaze and Duursma \cite{Bonnecaze1} considered the translate of linear codes over $\ZZ_4.$ Moreover, they \cite{Bonnecaze1} also introduced a method to compute the complete weight enumerators of the translate codes. The decoding algorithm for linear codes over $\ZZ_4$ has been observed by
Byrne, Greferath, Pernas, and Zumbr\"{a}gel \cite{Byrne2013}. 

Regarding the construction of optimal codes over $\ZZ_4,$ so far only self-dual codes are available in the literatures. See, for instance,
\cite{Huffman2005} and \cite{Rains1999}.
On the other hand, direct constructions of linear (but not self-dual) codes over $\ZZ_4$ are still in the initial stage. There is not much literature on it, and among them are the works of Gulliver and Wong \cite{Gulliver2007}, Dougherty, Gulliver, Park, and Wong \cite{Dougherty2007}, and also Aydin and Asamov \cite{Aydin2009, Aydin2}. So far, the new linear codes over $\ZZ_4$ obtained undirectly, as a $\ZZ_4$-image of certain Gray maps from finite rings to $\ZZ_4,$ where in many cases, the resulting codes have not-so-good minimum distances (see, e.g., \cite{Bustomi21} for recent development in this direction).

The purpose of this paper is two folds. First, we reprove the Plotkin-type bound for the Lee distance of linear codes by an elementary approach. Second, we introduce new construction methods of linear codes over $\ZZ_4.$ By using these methods, we obtained many new free linear codes over $\ZZ_4$ with the highest known minimum Lee distance as well as optimal linear codes over $\ZZ_4.$ Our results contribute significantly to the Aydin and Asamov's database of $\ZZ_4$ codes \cite{Aydin2009, Aydin2}.

The organization of the paper is as follows. In Section 2, we describe several basic facts regarding linear codes over $\ZZ_4.$
Section 3 contains the proof of Plotkin-type bound for the Lee distance.
Several construction methods for linear codes over $\ZZ_4$ and their applications to obtain new optimal codes over $\ZZ_4$ as well as the linear codes over $\ZZ_4$ with the highest known minimum Lee distance are given in Section 4. This paper is ended by concluding remarks. We follow \cite{Huffman} for undefined terms in coding theory.

\section{Preliminaries}

A \emph{code} of length $n$ over the ring $\ZZ_4$ is a nonempty subset of $\ZZ_4^n.$ If the code is also a submodule of $\ZZ_4^n,$
then we say that the code is \emph{linear.} The linear code is called \emph{free} if it is a free submodule of $\ZZ_4^n.$ \textcolor{red}{We only consider nonzero linear codes.}


A matrix $G\in \ZZ_4^{k \times n}$ is called a \emph{generator matrix} of a linear code $C$ of length $n$ over $\ZZ_4$ if the rows of $G$ generate $C$ and no proper subset of the rows of $G$ generates $C$.

Two codes are said to be \emph{equivalent} if one can be obtained from the other by permuting the coordinates and (if necessary) changing the signs of certain coordinates. Codes differing by only a permutation of coordinates are called \emph{permutation-equivalent.} It is well-known (see \cite{Hammons1994}) that any linear code over $\ZZ_4$ is permutation-equivalent to the linear code $C$ with generator matrix $G$ of the form
\begin{equation}\label{G-standard}
	G=
	\begin{pmatrix}
		I_{k_1} & A & B_1+2 B_2\\
		0 & 2 I_{k_2} & 2 D
	\end{pmatrix},
\end{equation}
where $A,B_1,B_2,$ and $D$ are $(0,1)$-matrices. Moreover, the code $C$ is a free linear code if and only if $k_2=0.$ The generator matrix of a linear code $C$ over $\ZZ_4$ is called in a \emph{standard form} if it has the form as given in the Equation (\ref{G-standard}).

The \emph{Lee weight} of $ x \in \ZZ_4,$ denoted by $w_L(x),$ is defined by $w_L(0)=0,$ $w_L(1)=1,$ $w_L(2)=2,$ and $w_L(3)=1.$ The Lee weight of a vector $\xv=(x_1,x_2,\ldots,x_n) \in \ZZ_4^n$ is defined as
$
w_L(\xv)=\sum_{i=1}^n w_L(x_i).
$
For $\xv,\yv \in \ZZ_4^n,$ the Lee distance between $\xv$ and $\yv,$ denoted by $d(\xv,\yv),$ is defined by
$d(\xv,\yv)=w_L(\xv-\yv).$ The \emph{minimum Lee distance} of a linear code $C \seq \ZZ_4^n$ is defined by \[d_L=d_L(C):=\mathrm{min}\{d_L(\xv,\yv):~\xv,\yv \in C, \xv \neq \yv\}.\] It is clear that for a linear code $C,$ the minimum Lee distance is exactly the same as the minimum Lee weight, namely
$d_L(C)=\mathrm{min}\{ w_L(\xv):~\xv \in C, \xv \neq \mathbf{0}\}.$
We write the parameters of a linear code $C$
over $\ZZ_4$ as $[n, 4^{k_1}2^{k_2}, d_L],$ where $n$ is a length of $C,$ $|C| = 4^{k_1}2^{k_2},$ and $d_L = d_L(C).$ Following
Hammons et al. \cite{Hammons1994} (cf. \cite{Wan}), we say that $C$ is of type $4^{k_1}2^{k_2}.$

Determining or estimating the minimum distance of a given linear code is one of the fundamental problems in coding theory. It is because the minimum distance of the code is proportional to its capability in detecting and correcting errors that appear during the transmission of information.

In 2001, Dougherty and Shiromoto \cite{Dougherty2001} proved an upper bound for the minimum Lee distance of codes over $\ZZ_4$, as follows.

\begin{thm}[Singleton-type Lee distance bound (Theorem 3.1 in \cite{Dougherty2001})]
	If $C$ is a linear code of length $n$ over $\ZZ_4$ with parameters $[n,4^{k_1}2^{k_2},d_L]$, then
	\begin{equation}\label{LDB1}
			d_L\leq 2n-2k_1-k_2+1.
	\end{equation}

\end{thm}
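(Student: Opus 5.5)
Plan: reduce to the standard form (\ref{G-standard}) and exhibit a nonzero codeword of small Lee weight by choosing a suitable combination of rows that kills many coordinates. Since permutation-equivalence preserves Lee weights and the type $4^{k_1}2^{k_2}$, we may assume $G$ is in standard form, with columns split into three blocks of sizes $k_1$, $k_2$, and $n-k_1-k_2$.

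First consider the case $k_1\ge 1$. Take the last row of the first block of rows, namely a row $r=(e_{k_1}, a, b)$ where $e_{k_1}$ is the $k_1$-th unit vector of length $k_1$, $a\in\{0,1\}^{k_2}$ and $b\in\ZZ_4^{n-k_1-k_2}$. Multiply it by $2$: the codeword $2r$ has entries in $\{0,2\}$, it is zero on the first $k_1-1$ coordinates and equal to $2$ on coordinate $k_1$. Next, use the rows $(0,2I_{k_2},2D)$ to cancel $2a$ on the middle block: $2r+\sum_{j: a_j=1}(0,2e_j,2d_j)$ vanishes on the middle block. The resulting word $\cv$ is nonzero (coordinate $k_1$ equals $2$) and all its entries lie in $\{0,2\}$, so
\[
w_L(\cv)\le 2\bigl(1+(n-k_1-k_2)\bigr)=2n-2k_1-2k_2+2 .
\]
This is not yet the bound, so the crude choice must be refined; this is the main obstacle. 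The fix I would try is to avoid doubling: use instead a $\ZZ_4$-combination of the first $k_1$ rows. Those rows have identity in the first block, so the row $r$ itself has Lee weight at most $1+w_L(a)+w_L(b)$, but $a$ cannot be cancelled over $\ZZ_4$ by the order-$2$ rows unless we double. A cleaner route is to count: the code $C$ has $4^{k_1}2^{k_2}$ codewords, and the projection onto any set $S$ of $n-d_L+1$ coordinates must be injective in a suitable weighted sense. Concretely, I would argue by contradiction: suppose $d_L\ge 2n-2k_1-k_2+2$. Puncture on the last $t$ coordinates; since a single coordinate contributes Lee weight at most $2$, deleting $t$ coordinates lowers weights by at most $2t$, so for $t$ with $2t<d_L$ the punctured code still has $4^{k_1}2^{k_2}$ distinct words. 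Choosing $t=n-k_1-\lceil k_2/2\rceil$ in a careful way, I would compare with the maximal size of a $\ZZ_4$-submodule of $\ZZ_4^{n-t}$ of the required type.

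More precisely, and this is the decisive step, the refined counting uses that words in $2\ZZ_4^m$ have weight a multiple of $2$ while the torsion part $2C$ has size $2^{k_1+k_2}$. Puncturing the $\{0,2\}$-valued subcode $2C\cup\{\text{order-2 part}\}$, which has $2^{k_1+k_2}$ elements and minimum weight $\ge d_L$ with every coordinate contributing $0$ or $2$, amounts to a binary code of length $n$, dimension $k_1+k_2$ and Hamming distance $\ge d_L/2$; binary Singleton gives $d_L/2\le n-k_1-k_2+1$. Combined with the free-part estimate $d_L\le 2(n-k_1)+1$-type count from injectivity of projections of $C$ onto $n-\lfloor (d_L-1)/2\rfloor$ coordinates (which yields $4^{k_1}2^{k_2}\le 4^{\,n-\lfloor (d_L-1)/2\rfloor}$), one gets $2k_1+k_2\le 2n-2\lfloor (d_L-1)/2\rfloor$, i.e.\ $d_L-1\le 2n-2k_1-k_2$ up to parity; handling the parity of $d_L$ and $k_2$ (when $k_2$ is odd use that the projection image, being a submodule, has size $4^{a}2^{b}$ with the last punctured-into coordinate able to absorb one factor $2$) gives exactly (\ref{LDB1}). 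The parity bookkeeping is where I expect the real difficulty.
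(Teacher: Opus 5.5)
The paper gives no proof of this statement; it is quoted from Dougherty--Shiromoto. So your argument has to stand on its own, and as written it does not close.

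\textbf{Where it fails.} Your two final ingredients have limited reach:
\begin{itemize}
\item Injectivity of the projection after deleting $t\le\lfloor (d_L-1)/2\rfloor$ coordinates gives only $d_L\le 2n-2k_1-k_2+2$ when $d_L$ is even.
\item Binary Singleton on the order-$2$ subcode $\{\cv\in C:\,2\cv=\mathbf{0}\}$ gives $d_L\le 2n-2k_1-2k_2+2$. (That subcode is the one with $2^{k_1+k_2}$ elements; $2C$ has only $2^{k_1}$.)
\end{itemize}
The second inequality settles every case with $k_2\ge 1$, and the first settles odd $d_L$. The case left open is $k_2=0$ with $d_L=2(n-k_1)+2$. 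There your inequalities collapse to $4^{k_1}\le 4^{k_1}$ and $n-k_1+1\le n-k_1+1$, so neither yields a contradiction.

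Your proposed parity repair targets $k_2$ odd, which is exactly the case needing no repair: there $2n-2k_1-k_2+2$ is odd, so an even $d_L$ is automatically at most $2n-2k_1-k_2+1$. The ``free-part estimate $d_L\le 2(n-k_1)+1$'' you invoke is precisely what the injectivity count fails to deliver for even $d_L$. It is asserted, not derived.

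\textbf{The missing idea: use surjectivity, not just injectivity.} When $k_2$ is even, delete $t=n-k_1-k_2/2$ coordinates. The projection is then a bijection of $C$ onto $\ZZ_4^{k_1+k_2/2}$. Hence some codeword restricts to $(1,0,\dots,0)$ on the kept coordinates, and its Lee weight is at most $1+2t=2n-2k_1-k_2+1$.

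Even simpler, you had the proof in your second paragraph and discarded it.
\begin{itemize}
\item For $k_2\ge 1$, your $\{0,2\}$-valued codeword has weight at most $2(n-k_1-k_2+1)$, which is $\le 2n-2k_1-k_2+1$ precisely because $k_2\ge 1$. If $k_1=0$, use a row of the second block instead.
\item For $k_2=0$ there is no middle block to cancel. The row $r=(e_{k_1},b)$ itself then has Lee weight at most $1+2(n-k_1)$.
\end{itemize}

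Alternatively, the Gray image $\Phi(C)$ is a binary code of length $2n$ with $2^{2k_1+k_2}$ words and minimum Hamming distance $d_L$. The ordinary Singleton bound $2^{2k_1+k_2}\le 2^{2n-d_L+1}$ is exactly (\ref{LDB1}).
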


\begin{rem}
	\emph{Dougherty and Shiromoto \cite{Dougherty2001} called the codes meeting the bound (\ref{LDB1}) as} maximum Lee distance separable (MLDS) \emph{codes.}
\end{rem}

Let $G \in \ZZ_4^{(k_1+k_2)\times n}$ be a generator matrix of a linear code $C$ over $\ZZ_4.$ Following Klove \cite{Klove}, for any $\cv\in\mathbb{Z}_4^{k_1+k_2},$ we define the \emph{multiplicity} of $\cv,$ denoted by $\mu(\cv),$ as the number of occurrences of $\cv$ as a column vector in $G.$ Observe that for any $\xv\in\mathbb{Z}_4^k$ with $k:=k_1+k_2$, we have
\[
w_L\left(\xv G\right)=\sum_{\cv\in\mathbb{Z}_4^k} \mu(\cv) w_L(\xv\cdot\cv).
\]
It is also clear that
\[
\sum_{\cv\in\mathbb{Z}_4^k} \mu(\cv)=n.
\]

\section{Plotkin-type Lee distance bound}

For $\xv=(x_1,x_2,\ldots,x_n),~\yv=(y_1,y_2,\ldots,y_n)\in\ZZ_4^n,$ we define the \emph{Euclidean inner product} of $\xv$ and $\yv$ as
$\xv\cdot\yv=\sum_{i=1}^n x_i y_i\in \ZZ_4.$
For $\xv\in\mathbb{Z}_4^n,$ let $\xv'=(x_1,\dots,x_{n-1}) \in \ZZ_4^{n-1},$ namely the vector obtained from $\xv$ by dropping its last entry. For the rest of this paper, let \textbf{0} denote the zero vector.

The lemma below shows that for any given nonzero codeword $\cv\in\ZZ_4^k$, the sum $\sum_{\xv\in\ZZ_4^k}w_L(\xv\cdot \cv)$ is constant, and depends only on $k.$ This result is very important to derive subsequent properties.

\begin{lem}\label{lemma1}
	Let $k$ be a positive integer. If $\cv=(c_1,c_2,\dots,c_k)\in\ZZ_4^k$ is a nonzero vector, then
	\[
	\sum_{\xv\in\ZZ_4^k}w_L(\xv\cdot \cv)=4^k.
	\]
\end{lem}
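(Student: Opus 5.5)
The idea is to study the distribution of the values $\xv\cdot\cv$ as $\xv$ ranges over $\ZZ_4^k$. The map $\phi:\ZZ_4^k\to\ZZ_4$, $\xv\mapsto \xv\cdot\cv$, is a $\ZZ_4$-module homomorphism, so every value in its image is attained exactly $4^k/|\mathrm{Im}\,\phi|$ times (each fibre is a coset of $\ker\phi$). The image is a nonzero ideal of $\ZZ_4$, because $\cv\neq\mathbf{0}$ and $\phi$ applied to the standard basis vectors returns the entries $c_i$. So I will split into two cases according to which nonzero ideal it is.

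\emph{Case 1: some $c_i$ is a unit, i.e.\ $c_i\in\{1,3\}$.} Then $\mathrm{Im}\,\phi=\ZZ_4$, so each of $0,1,2,3$ is attained exactly $4^{k-1}$ times. Hence
\[
\sum_{\xv\in\ZZ_4^k}w_L(\xv\cdot\cv)=4^{k-1}\bigl(0+1+2+1\bigr)=4^k .
\]

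\emph{Case 2: every $c_i\in\{0,2\}$.} Since $\cv\neq\mathbf{0}$, the image is $\{0,2\}$, and each of the two values is attained exactly $4^k/2$ times. This gives
\[
\sum_{\xv\in\ZZ_4^k}w_L(\xv\cdot\cv)=\tfrac{4^k}{2}(0+2)=4^k .
\]

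The only step needing care is the equidistribution claim, namely that every fibre of $\phi$ is a coset of $\ker\phi$ and so all fibres have size $|\ker\phi|=4^k/|\mathrm{Im}\,\phi|$. This is just the first isomorphism theorem for abelian groups, so I do not expect a real obstacle. If a more elementary argument in the style of the paper is preferred, the same counts can be obtained by induction on $k$ using the truncation $\xv'$: fix a coordinate with $c_i\neq 0$, let the other coordinates vary freely, and observe that as $x_i$ runs through $\ZZ_4$, the term $c_ix_i$ runs through $\ZZ_4$ once (if $c_i$ is a unit) or through $\{0,2\}$ twice each (if $c_i=2$). This yields the same distribution of values.
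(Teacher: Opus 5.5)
Your proof is correct, but it is organized differently from the paper's.

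You argue globally. The map $\xv\mapsto\xv\cdot\cv$ is a homomorphism onto the ideal generated by the entries of $\cv$. By the first isomorphism theorem its values are therefore equidistributed over $\ZZ_4$ or over $2\ZZ_4$. The lemma then reduces to the fact that $w_L$ has average $1$ on each of these two ideals.

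The paper argues fibre by fibre. It fixes a coordinate $k$ with $c_k\neq 0$ and groups the vectors $\xv$ by $\xv'$ through the sets $H_i$. It then uses the pointwise identity $\sum_{x=0}^{3}w_L(a+xc)=4$, which holds for every $a\in\ZZ_4$ and every nonzero $c\in\ZZ_4$ (for $c=2$ and odd $a$ the four values are $1,3,1,3$). Summing over the $4^{k-1}$ choices of $\xv'$ gives $4^k$.

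The paper's route needs neither the distribution of $\xv\cdot\cv$ nor a case split on whether $\cv$ has a unit entry. The price is that it must verify the weight sum on every coset $a+c\ZZ_4$, not only on the ideals themselves. Your route yields strictly more: it gives the full value distribution of $\xv\cdot\cv$ and isolates exactly which property of the Lee weight is used.

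Your ``elementary'' fallback is close to the paper's proof, but as phrased it tracks the distribution. When the fixed coordinate has $c_i=2$ while another entry is a unit, that genuinely requires the inductive hypothesis on $\xv'\cdot\cv'$. The paper avoids this, because the inner weight sum equals $4$ whatever the value $a=\xv'\cdot\cv'$ is.
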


\begin{proof}
It is clear for the case $k=1$. Since $\cv$ is a nonzero vector, there exists $i$ such that $c_i\neq 0$. Without loss of generality, let $i=k$.
	If $\cv' = \mathbf{0},$ then we have
	\begin{align*}
		\sum_{\xv\in\mathbb{Z}_4^{k}}w_L(\xv\cdot \cv)&=\sum_{\xv\in\mathbb{Z}_4^{k}}w_L(x_{k}c_{k})=4^k.
	\end{align*}

If $\cv' \neq \mathbf{0},$ define $
H_i:=\{\yv\in\mathbb{Z}_4^{k-1}:~\yv\cdot\cv'=i\} \text{ for }i\in \ZZ_4$. Since $c_{k}\neq 0$, we have
	\begin{align*}
		\sum_{\xv\in\mathbb{Z}_4^{k}}w_L(\xv\cdot \cv)&=\sum_{i=1}^4\sum_{\xv'\in H_i}\sum_{x_k=0}^3 w_L(\xv'\cdot \cv'+x_{k}c_{k})\\
		&=4|H_0|+4|H_1|+4|H_2|+4|H_3|=4^k.
	\end{align*}
\end{proof}

By applying Lemma \ref{lemma1}, we have the following lemma.
\begin{lem}\label{lem}
	Let $n$ and $k$ be positive integers. If a matrix $G\in\mathbb{Z}_4^{k\times n}$ has no zero column, then
	\[
	\sum_{\xv\in\mathbb{Z}_4^k}w_L\left(\xv G\right)=4^k n.
	\]
\end{lem}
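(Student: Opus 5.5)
The plan is a double-counting argument that reduces the statement to Lemma \ref{lemma1} column by column. Write $\cv_1,\dots,\cv_n\in\ZZ_4^k$ for the columns of $G$. For each $\xv\in\ZZ_4^k$, the $j$-th entry of $\xv G$ is exactly $\xv\cdot\cv_j$. Since the Lee weight of a vector is the sum of the Lee weights of its entries, we get
\[
w_L(\xv G)=\sum_{j=1}^n w_L(\xv\cdot\cv_j).
\]
Equivalently, one may use the multiplicity identity $w_L(\xv G)=\sum_{\cv}\mu(\cv)w_L(\xv\cdot\cv)$ recorded in Section 2; both routes give the same result.

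Next, I sum over all $\xv\in\ZZ_4^k$ and interchange the two finite sums:
\[
\sum_{\xv\in\ZZ_4^k}w_L(\xv G)=\sum_{j=1}^n\sum_{\xv\in\ZZ_4^k}w_L(\xv\cdot\cv_j).
\]
By hypothesis $G$ has no zero column, so every $\cv_j$ is a nonzero vector of $\ZZ_4^k$. Lemma \ref{lemma1} therefore applies to each inner sum and gives $4^k$ for every $j$. Summing over the $n$ columns yields $4^k n$.

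In the multiplicity formulation the same steps read as follows. Interchanging the sums gives $\sum_{\cv}\mu(\cv)\sum_{\xv}w_L(\xv\cdot\cv)$. The term $\cv=\mathbf{0}$ has $\mu(\mathbf{0})=0$ by hypothesis, so it drops out. Every other $\cv$ contributes $\mu(\cv)4^k$ by Lemma \ref{lemma1}. Finally, $\sum_{\cv\neq\mathbf{0}}\mu(\cv)=n$ again gives $4^kn$.

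There is no real obstacle here. The only point that needs care is noting where the no-zero-column hypothesis is used: it is exactly what makes Lemma \ref{lemma1} applicable to every column. A zero column would contribute $0$ rather than $4^k$ to the total.
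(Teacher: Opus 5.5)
Your proposal is correct and takes essentially the paper's approach. The paper gives no written proof beyond the remark that the lemma follows by applying Lemma \ref{lemma1}, and your column-by-column decomposition of $w_L(\xv G)$, followed by interchanging the sums and using the no-zero-column hypothesis, is exactly that application.
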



\begin{lem}[Constant summation of Lee weight]\label{Constant summation}
	Let $C$ be a linear code over $\ZZ_4$ with parameters $[n,4^{k_1}2^{k_2},d_L].$ If $\mu(\mathbf{0})=0$, then
	\[
	\sum_{\cv\in C} w_L(\cv)=|C|n.
	\]
\end{lem}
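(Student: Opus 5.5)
The plan is to reduce the statement to Lemma \ref{lem} by summing over all message vectors rather than over codewords. Let $G\in\ZZ_4^{k\times n}$, with $k=k_1+k_2$, be a generator matrix of $C$, for instance in the standard form (\ref{G-standard}). The hypothesis $\mu(\mathbf{0})=0$ says exactly that $G$ has no zero column, so Lemma \ref{lem} applies and gives
\[
\sum_{\xv\in\ZZ_4^k} w_L(\xv G)=4^k n .
\]

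What remains is to relate this sum over $\xv\in\ZZ_4^k$ to the sum over $\cv\in C$. Consider the map $\phi:\ZZ_4^k\to C$ given by $\phi(\xv)=\xv G$. It is a $\ZZ_4$-module homomorphism, and it is surjective because the rows of $G$ generate $C$. So every fibre $\phi^{-1}(\cv)$ is a coset of $\ker\phi$, and all fibres have the same size
\[
|\ker\phi| = \frac{|\ZZ_4^k|}{|C|}=\frac{4^{k_1+k_2}}{4^{k_1}2^{k_2}}=2^{k_2}.
\]
This gives
\[
\sum_{\xv\in\ZZ_4^k} w_L(\xv G)=\sum_{\cv\in C}|\phi^{-1}(\cv)|\,w_L(\cv)=2^{k_2}\sum_{\cv\in C}w_L(\cv).
\]

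Combining the two displays yields
\[
\sum_{\cv\in C} w_L(\cv)=\frac{4^k n}{2^{k_2}}=4^{k_1}2^{k_2}n=|C|n .
\]

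The only subtle point is the fibre count. When $k_2>0$ the map $\xv\mapsto\xv G$ is not injective, because the last $k_2$ rows of $G$ have order $2$. So one cannot identify the sum over $\ZZ_4^k$ with the sum over $C$ directly. The first-isomorphism argument above handles this uniformly, using only $|C|=4^{k_1}2^{k_2}$ and surjectivity, and it avoids describing the kernel explicitly. In the free case $k_2=0$ the map is a bijection and the statement is immediate from Lemma \ref{lem}.
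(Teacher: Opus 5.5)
Your proof is correct and follows the paper's argument: apply Lemma \ref{lem} to get $\sum_{\xv}w_L(\xv G)=4^k n$, then divide by the common fibre size $2^{k_2}$ of $\xv\mapsto\xv G$. The one difference is how the fibre size is obtained. The paper exhibits the $2^{k_2}$ preimages explicitly, by adding vectors with entries in $\{0,2\}$ in the last $k_2$ coordinates. You instead get $|\ker\phi|=4^k/|C|$, which has the small advantage of justifying ``exactly'' rather than only ``at least'' $2^{k_2}$ preimages per codeword.
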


\begin{proof}
	Let $G$ be a generator matrix of $C$ in a standard form. Since $G$ does not have any zero column, then, by Lemma \ref{lem}, we have $\sum_{\xv\in\mathbb{Z}_4^{k}}w_L\left(\xv G\right)=4^{k} n$ for $k:=k_1+k_2$. If $k_2=0,$ then
	\[
	\sum_{\cv\in C} w_L(\cv)=\sum_{\xv\in\mathbb{Z}_4^{k_1}}w_L\left(\xv G\right)=4^{k_1} n=|C|n.
	\]
	If $k_2>0,$ let $\mathbf{g^{(1)}},\dots,\mathbf{g^{(k_2)}}$ be all distinct $k_2$ row vectors of $G$ which have entries $0$ or $2$ only. We have
	\[
	\xv G=\xv G+\sum_{i=1}^{k_2}x_i \mathbf{g^{(i)}}=\Big(\xv+[0,\dots,0,x_1,\dots,x_{k_2}]\Big)G
	\]
	for $x_1,\dots,x_{k_2}\in\{0,2\}.$ There are $2^{k_2}$ possible different values for $x_1,\dots,x_{k_2}.$ Hence, in the summation $\sum_{\xv\in\mathbb{Z}_4^{k}}w_L\left(\xv G\right),$
	every codeword in $C$ appears exactly $2^{k_2}$ times. Thus
	\[
	4^{k} n=\sum_{\xv\in\mathbb{Z}_4^{k}}w_L\left(\xv G\right)=2^{k_2}\sum_{\cv\in C} w_L(\cv).
	\]
	Therefore,
	\[
	\sum_{\cv\in C} w_L(\cv)=\frac{4^{k} n}{2^{k_2}}=4^{k_1}2^{k_2}n=|C|n.
	\]
\end{proof}

Next, we prove the Plotkin-type bound for the minimum Lee distance.

\begin{thm}[Plotkin-type Lee distance bound]\label{Plotkin}
	Let $C$ be a linear code over $\ZZ_4$ of parameters $[n,4^{k_1}2^{k_2},d_L].$ Then
	\begin{equation}\label{PLDB}
		d_L\leq \frac{|C|}{|C|-1}n.
	\end{equation}
\end{thm}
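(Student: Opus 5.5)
The plan is a standard averaging (Plotkin) argument built on Lemma \ref{Constant summation}. Every nonzero codeword has Lee weight at least $d_L$, and the zero codeword has weight $0$, so
\[
(|C|-1)\,d_L \le \sum_{\cv\in C} w_L(\cv).
\]
If we can bound the right-hand side above by $|C|\,n$, the claimed inequality $d_L\le \frac{|C|}{|C|-1}n$ follows by dividing by $|C|-1$. This division is legitimate because the code is nonzero, so $|C|\ge 2$.

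The only subtlety is that Lemma \ref{Constant summation} requires $\mu(\mathbf{0})=0$, i.e.\ a generator matrix with no zero column. I will therefore reduce to that case.

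Suppose $G$, taken in standard form, has $z\ge 0$ zero columns. Delete these coordinates to obtain a code $C^*$ of length $n-z$.
\begin{itemize}
\item The deletion map $C\to C^*$ is a bijection, since the deleted coordinates are identically zero on $C$. Hence $|C^*|=|C|$, and $C^*$ is still linear.
\item Lee weights are preserved, so $d_L(C^*)=d_L(C)$.
\item The deleted matrix still generates $C^*$ and has no zero column, so $\mu(\mathbf{0})=0$ for $C^*$.
\end{itemize}
If $n-z=0$, the code would be zero, which is excluded; so $C^*$ has positive length.

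Applying Lemma \ref{Constant summation} to $C^*$ gives
\[
\sum_{\cv\in C} w_L(\cv)=\sum_{\cv\in C^*}w_L(\cv)=|C|(n-z)\le |C|\,n .
\]
Combining this with the first inequality gives $(|C|-1)d_L\le |C|(n-z)\le |C|n$, which is the bound.

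The main point to check carefully is that $C^*$ meets the hypotheses of Lemma \ref{Constant summation}. That lemma's proof uses a standard-form generator matrix, and deleting zero columns from a standard-form matrix leaves the identity blocks $I_{k_1}$ and $2I_{k_2}$ intact, since those columns are nonzero. The remaining reasoning is a direct averaging argument, so this reduction is the only step needing care.
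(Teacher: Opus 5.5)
Your proposal is correct and follows essentially the same route as the paper, which sets $m=\mu(\mathbf{0})$ and writes $|C|(n-m)=\sum_{\cv\in C}w_L(\cv)\ge(|C|-1)d_L$ directly from the constant-summation lemma. Your deletion of the $z=m$ identically-zero coordinates is the justification the paper leaves implicit (``it is easy to see'') for applying that lemma when $\mu(\mathbf{0})>0$.
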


\begin{proof}
	Let $\mu(\mathbf{0}) = m$. By using Lemma \ref{Constant summation}, it is easy to see that
	$|C|(n-m)=\sum_{\cv\in C} w_L(\cv)\geq (|C|-1)d_L.$
	Therefore,
	\[
	d_L\leq \frac{|C|}{|C|-1}(n-m)\leq\frac{|C|}{|C|-1}n.
	\]
\end{proof}

The linear codes $C$ whose minimum Lee distance $d_L(C)$ is an integer nearest to the upper bound of the Plotkin-type Lee distance bound (\ref{PLDB}) as given in Theorem \ref{Plotkin} is called \emph{Plotkin-optimal.} In other words, a linear code $C$ is Plotkin-optimal if $\displaystyle d_L(C)=\left \lfloor \frac{|C|}{|C|-1}n \right \rfloor.$

\begin{rem}
	\emph{The bound similar to Theorem \ref{Plotkin} for (not necessarily linear) codes over $\ZZ_m$ has been proven by Wyner and Graham \cite{Wyner1968}. However, our proof above is very simple (compare with the one in \cite{Wyner1968}), although limited for only linear codes over $\ZZ_4.$}
\end{rem}

\section{Optimal codes}
It is well-known about the nonexistence of MLDS codes except the trivial ones. This fact was proved by Dougherty and Shiromoto \cite{Dougherty2001}.

\begin{thm}[\cite{Dougherty2001}]
	There is no MLDS codes over $\ZZ_4$ except the trivial ones, namely the linear codes having parameters $[n,4^02^1,2n],$ $[n,4^n2^0,1],$ or $[n,4^{n-1}2^1,2].$
\end{thm}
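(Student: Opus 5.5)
We must show: if $d_L = 2n-2k_1-k_2+1$, then $(k_1,k_2)\in\{(0,1),(n,0),(n-1,1)\}$. The plan is to combine the Singleton-type bound (\ref{LDB1}) with the Plotkin-type bound of Theorem \ref{Plotkin}, plus a residue (mod 2) reduction.

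\emph{Step 1: the case $k_1=0$.} Here $C$ consists of vectors with entries in $\{0,2\}$, so $C=2C'$ with $C'$ a binary $[n,k_2]$ code and $d_L(C)=2d_H(C')$. The MLDS condition reads $2d_H(C')=2n-k_2+1$. Parity forces $k_2$ odd. The binary Singleton bound $d_H(C')\le n-k_2+1$ gives
\[
2n-k_2+1\le 2n-2k_2+2,
\]
so $k_2\le 1$. Hence $k_2=1$, $d_L=2n$, and we get the trivial code $[n,4^02^1,2n]$.

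\emph{Step 2: the case $k_1\ge 1$.} Consider the residue code $C^{(1)}=\{\xv \bmod 2 : \xv\in C\}$, or, more conveniently, the torsion subcode
\[
\mathrm{Tor}(C)=\{\xv\in C: 2\xv=\mathbf{0}\},
\]
which equals $2\cdot\mathrm{Res}\oplus(\text{the }2D\text{ part})$. This is $2C''$, where $C''$ is a binary code of dimension $k_1+k_2$. Its nonzero words have Lee weight $2d_H$, and $\mathrm{Tor}(C)\subseteq C$, so
\[
d_L(C)\le 2d_H(C'')\le 2(n-k_1-k_2+1).
\]
Combining with $d_L=2n-2k_1-k_2+1$ gives
\[
2n-2k_1-k_2+1\le 2n-2k_1-2k_2+2,
\]
i.e. $k_2\le 1$.

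\emph{Step 3: sharpening to the trivial cases.} Now split according to $k_2$.

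If $k_2=1$, equality must hold throughout, so $C''$ is a binary MDS code of dimension $k_1+1\ge 2$. Binary MDS codes are only the full space, the repetition code, or the even-weight code. Dimension $n$ gives $k_1=n-1$, which is the trivial code $[n,4^{n-1}2^1,2]$. The remaining options must be excluded: the repetition code is impossible since its dimension is $1$, and the even-weight code has $k_1+1=n-1$ and $d_L=3$. Test $d_L=3$ against $\mathrm{Tor}(C)$: we would need $2\cdot 2\le 3$? The inequality $d_L\le 2d_H(C'')=4$ holds, so it does not immediately contradict. Instead I would use the shortened/punctured argument: take a codeword $\xv=\mathbf{g}_1+\dots$ in the free part and subtract a suitable torsion codeword to kill coordinates.

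If $k_2=0$, then $d_L$ is odd, $d_L=2(n-k_1)+1$, while $d_L\le 2d_H(C'')$ with $C''$ an $[n,k_1]$ binary code. For $1\le k_1<n$ I need a contradiction. The approach is to puncture on $k_1-1$ information coordinates: combinations of rows of the standard form $G$ can be chosen to vanish on $k_1-1$ of the identity coordinates. This reduces to a $k_1=1$ code of length $n-k_1+1$ with minimum Lee distance at least $d_L-0$. For a single generator $\mathbf{g}$ of length $m=n-k_1+1$, the words $\mathbf{g},2\mathbf{g}$ give
\[
\min\bigl(w_L(\mathbf{g}),\,2\,|\mathrm{supp}\,\mathbf{g}|_{\mathrm{odd}}\bigr)\le\ldots
\]
Averaging via Lemma \ref{Constant summation} over $\{\mathbf{g},2\mathbf{g},3\mathbf{g}\}$ gives $3d_L\le 4m=4(n-k_1+1)$, i.e.
\[
3(2n-2k_1+1)\le 4n-4k_1+4,
\]
so $2(n-k_1)\le 1$, and hence $k_1=n$. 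This is the trivial code $[n,4^n2^0,1]$.

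The same puncturing-plus-Plotkin averaging should also dispose of the leftover $k_2=1$, $k_1\le n-2$ case: reduce to $k_1=0$, $k_2=1$ with a few extra coordinates and apply Theorem \ref{Plotkin} with $|C|=2$ or $8$. The main obstacle is making the puncturing step rigorous, namely guaranteeing that killing coordinates keeps a nonzero codeword of the required weight, and checking the $k_2=1$ subcase numerics.
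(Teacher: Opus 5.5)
\paragraph{Verdict.} Your proposal has a genuine gap: the case $k_2=1$ is left open. The fix uses your own $k_2=0$ argument with one extra generator. For context, the paper gives no proof of this statement and only cites Dougherty--Shiromoto.

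\paragraph{What is correct, and where the gap is.} Steps~1 and~2 are correct, and so is your treatment of $k_2=0$. Your worry about rigor there is unfounded, because what you are doing is \emph{shortening}, not puncturing. The $k_1$-th row $\mathbf{g}$ of the standard-form $G$ vanishes on the first $k_1-1$ coordinates. So $\mathbf{g},2\mathbf{g},3\mathbf{g}$ are nonzero codewords of $C$, each of Lee weight at least $d_L$, and their weights add up to $4|\mathrm{supp}\,\mathbf{g}|\le 4(n-k_1+1)$. Deleting coordinates on which a subcode vanishes never lowers its minimum distance.

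The $k_2=1$ case fails for three reasons.
\begin{itemize}
\item There is an arithmetic slip. If the torsion code $C''$ is the even-weight code, then $k_1=n-2$ and the MLDS condition gives $d_L=2n-2k_1=4$, not $3$. So the torsion bound $d_L\le 2d_H(C'')=4$ holds with equality and gives nothing.
\item Theorem \ref{Plotkin} applied to $C$ itself does not help either. Here $|C|=2^{2n-3}$, and $\frac{|C|}{|C|-1}n>n\ge 4$ for every $n\ge 4$; for example, $\lfloor\frac{32}{31}\cdot 4\rfloor=4$ when $n=4$.
\item The reduction you gesture at is not specified correctly. Shortening on all $k_1$ free coordinates leaves only $\{\mathbf{0},\mathbf{h}\}$, with $\mathbf{h}$ the torsion row. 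This is your ``$|C|=2$'' option, where the Plotkin bound is $2(n-k_1)=d_L$, so there is no contradiction.
\end{itemize}

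\paragraph{The fix.} Both $\mathbf{g}$ (the $k_1$-th row) and $\mathbf{h}$ vanish on the first $k_1-1$ coordinates. So $\langle\mathbf{g},\mathbf{h}\rangle$ is a subcode of type $4^12^1$, i.e.\ it has $8$ codewords. It lives on $m=n-k_1+1$ coordinates and has minimum Lee distance at least $d_L=2(m-1)$. Theorem \ref{Plotkin} for this shortened code gives $2(m-1)\le\frac{8}{7}m$, i.e.\ $m\le 2$. Hence $k_1\ge n-1$, and together with $k_1+1\le n$ this gives $k_1=n-1$. This settles $k_2=1$ for all $k_1\ge 1$ at once, so the binary MDS classification in Step~3 becomes unnecessary.

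With this patch you have a complete, elementary proof that uses only Theorem \ref{Plotkin} and the binary Singleton bound. A different route goes through the Gray map. The image $\Phi(C)$ is a binary code of length $2n$, size $2^{2k_1+k_2}$ and minimum distance $d_L$. So an MLDS code yields a binary code meeting the Singleton bound, and such codes are trivial.
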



 Since there are no linear codes over $\ZZ_4$ having the minimum Lee distance attaining the Singleton-type Lee distance bound (\ref{LDB1}) except the trivial ones, then we consider the optimality of the codes with respect to the Plotkin-type Lee distance bound (\ref{PLDB}).  

\subsection{Optimal linear codes.}
In this section, we discuss several construction methods of optimal codes. For any given nonnegative integer $k_1,k_2$ with $k_1+k_2>0$, let $G^{(k_1,k_2)}$ denote the generator matrix of a linear code $C^{(k_1,k_2)}$ whose columns consist of all possible nonzero vectors in $\ZZ_4^{k_1}\times (2\ZZ_4)^{k_2}.$
We can define the matrix $G^{(k_1,k_2)}$ recursively as follows: \[G^{(0,1)} :=
\begin{bmatrix}
	2
\end{bmatrix}, G^{(1,0)} :=
\begin{bmatrix}
	1 &2 &3
\end{bmatrix},\]
\[
G^{(k_1+1,0)}:=\arraycolsep=3.0pt
\left[\begin{array}{@{}c|c|c|c| c c c@{}}
	G^{(k_1,0)} & G^{(k_1,0)} & G^{(k_1,0)} & G^{(k_1,0)} &\textbf{0} & \textbf{0} & \textbf{0} \\\hline
	0\;0\dots 0 &1\;1\dots 1 &2\;2\dots 2 &3\;3\dots 3 &1 &2 &3
\end{array}\;\right],
\]
\[G^{(k_1,k_2+1)}:=
\left[\begin{array}{@{}c|c|c @{}}
	G^{(k_1,k_2)} & G^{(k_1,k_2)} & \textbf{0}\\\hline
	0\;0\dots 0  &2\;2\dots 2 &2
\end{array}\;\right].
\]
\normalsize
The theorem below shows the existence of a linear code over $\ZZ_4$ of length $n$ satisfying the bound (\ref{PLDB}), for any given $k_1$ and $k_2.$  Moreover, the code has a constant Lee weight.

\begin{thm}[Existence of optimal codes]\label{konstruk}
	The linear code $C^{(k_1,k_2)}$ is of parameters $[n,4^{k_1}2^{k_2},d_L],$ with $n=4^{k_1}2^{k_2}-1,$ and $d_L=4^{k_1}2^{k_2}.$ Moreover, all nonzero codewords in $C^{(k_1,k_2)}$ have the Lee weight $d_L.$
\end{thm}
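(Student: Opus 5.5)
The plan is to verify the length and the cardinality directly, and then obtain the constant weight from a symmetry argument combined with Lemma \ref{Constant summation}.

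\emph{Length.} The columns of $G^{(k_1,k_2)}$ are exactly the nonzero vectors of $\ZZ_4^{k_1}\times(2\ZZ_4)^{k_2}$, so $n=4^{k_1}2^{k_2}-1$. This also follows by induction from the recursions: $3\cdot 4^{k_1}\cdot\frac{4^{k_1}-1}{3}$-type counting gives $4(4^{k_1}-1)+3=4^{k_1+1}-1$, and $2(N-1)+1=2N-1$.

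\emph{Cardinality.} After reordering the columns, $G^{(k_1,k_2)}$ contains the columns of $I_{k_1}$ and of $2I_{k_2}$, placed appropriately (the unit vectors $e_i$ for $i\le k_1$ and $2e_j$ for $j>k_1$). Hence it contains a standard-form generator matrix as in (\ref{G-standard}). The code is therefore of type $4^{k_1}2^{k_2}$, and the map $\xv\mapsto \xv G$ has kernel $\{0\}^{k_1}\times\{0,2\}^{k_2}$. In particular $|C|=4^{k_1}2^{k_2}$.

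\emph{Constant weight.} Fix a nonzero codeword $\xv G$, where $\xv\in\ZZ_4^{k_1+k_2}$ and $\xv\cdot\cv\ne 0$ for some column $\cv$. Its weight is
\[
w_L(\xv G)=\sum_{\cv\in V\setminus\{\mathbf 0\}} w_L(\xv\cdot\cv), \qquad V:=\ZZ_4^{k_1}\times(2\ZZ_4)^{k_2},
\]
since $\mu(\cv)=1$ for every nonzero $\cv\in V$. The map $\cv\mapsto\xv\cdot\cv$ is a group homomorphism $\phi:V\to\ZZ_4$. It is nonzero, so its image is either $\{0,2\}$ or all of $\ZZ_4$, and each element of the image has exactly $|V|/|\mathrm{im}\,\phi|$ preimages.

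If the image is $\ZZ_4$, the sum equals $\frac{|V|}{4}(0+1+2+1)=|V|$. If the image is $\{0,2\}$, the sum equals $\frac{|V|}{2}\cdot 2=|V|$. In both cases $w_L(\xv G)=4^{k_1}2^{k_2}$.

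Consequently every nonzero codeword has Lee weight $4^{k_1}2^{k_2}$, and $d_L=4^{k_1}2^{k_2}$. As a sanity check, this is consistent with Lemma \ref{Constant summation}, which gives a total weight of $|C|n=(|C|-1)|C|$ over the $|C|-1$ nonzero codewords. It also meets Theorem \ref{Plotkin} with equality.

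The main obstacle is the "equidistribution over the image" step: we need the fibres of the homomorphism $\phi$ to have equal size and the image to be a subgroup of $\ZZ_4$. This is standard group theory, but it is exactly what distinguishes this argument from the computation in Lemma \ref{lemma1}.
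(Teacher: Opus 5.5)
Your proof is correct, but it follows a different route from the paper's. For $k_2=0$ the paper notes that the columns exhaust $\ZZ_4^{k_1}\setminus\{\mathbf{0}\}$, so $w_L(\xv G^{(k_1,0)})=\sum_{\cv}w_L(\cv\cdot\xv)=4^{k_1}$ by Lemma~\ref{lemma1} with the roles of $\xv$ and $\cv$ swapped. For $k_2>0$ it inducts on $k_2$ through the recursive block form of $G^{(k_1,k_2)}$. It splits each column by its last entry ($0$ or $2$) and simply asserts that $\sum_{\cv'}\bigl[w_L(\cv'\cdot\xv')+w_L(\cv'\cdot\xv'+2x_k)\bigr]=4^{k_1}2^{k_2}$.

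Your fibre count for the homomorphism $\phi:\cv\mapsto\xv\cdot\cv$ on $V$ treats all $(k_1,k_2)$ at once and needs neither the recursion nor Lemma~\ref{lemma1}. It also applies verbatim to the nonzero elements of any subgroup of $\ZZ_4^{k}$.

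It also makes explicit two things the paper leaves loose. The first is the code size. Your kernel $\{0\}^{k_1}\times\{0,2\}^{k_2}$ is right. The paper's claim that $\xv$ is unique in $\ZZ_4^{k_1}\times(2\ZZ_4)^{k_2}$ fails for $k_2>0$, because coefficients in $\{0,2\}$ annihilate the last $k_2$ rows; a correct transversal would be $\ZZ_4^{k_1}\times\{0,1\}^{k_2}$. The second is the evaluation of the paper's inductive sum, which silently needs a case split. If $2x_k=2$, the identity $w_L(a)+w_L(a+2)=2$ gives the total directly. If $2x_k=0$, the induction hypothesis applies. Your equidistribution over the image of $\phi$, including the image $\{0,2\}$, packages this uniformly.

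What the paper's route buys is a direct tie to the recursive block structure that is reused in Theorem~\ref{konstruk2} and Theorem~\ref{konstruk4}.

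One cosmetic point: drop the phrase ``$3\cdot 4^{k_1}\cdot\frac{4^{k_1}-1}{3}$-type counting'' from your length paragraph, since it does not mean anything. The direct count of nonzero vectors of $V$ already suffices.
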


\begin{proof}
	First, consider the case $k_2=0.$ Observe that for any nonzero vector $\yv\in C^{(k_1,0)}$ there exists a unique $\xv\in\mathbb{Z}_4^{k_1}$ such that $\yv=\xv G^{(k_1,0)}$. By Lemma \ref{lemma1}, $w_L(\yv)$ is equal to
	\[
	w_L\left(\xv G^{(k_1,0)}\right)=\sum_{\cv\in\mathbb{Z}_4^{k_1}} \mu(\cv) w_L(\xv\cdot\cv)=\sum_{\cv\in\mathbb{Z}_4^{k_1}} w_L(\cv\cdot\xv)=4^{k_1}.
	\]
	For the case when $k_2>0,$ we do an induction on $k_2$. Observe that for any nonzero $\yv\in C^{(k_1,k_2)}$, there exists a unique $\xv\in\mathbb{Z}_4^{k_1}\times (2\mathbb{Z}_4)^{k_2}$ such that $\yv=\xv G^{(k_1,k_2)}$.
	
	Define $k:=k_1+k_2$. Notice that
	\begin{align*}
		w_L(\yv)
	&=\sum_{\cv\in\mathbb{Z}_4^{k}} \mu(\cv) w_L(\xv\cdot\cv)=\sum_{\cv'\in\mathbb{Z}_4^{k_1}\times (2\mathbb{Z}_4)^{k_2-1}} w_L(\cv'\cdot\xv')+w_L(\cv'\cdot\xv'+2x_{k}),
 	\end{align*}
and the last sum is equal to $4^{k_1}2^{k_2}$.
\end{proof}

The construction below is easy to verify.

\begin{lem}[Construction A]\label{konstruk1}
	If $C_1$ and $C_2$ are linear codes of type $4^{k_1}2^{k_2}$ and $n(C_1)+n(C_2)< 4^{k_1}2^{k_2}-1,$ having generator matrices (in standard form) $G_1$ and $G_2,$ respectively, then the linear code $C'$ generated by the matrix $G':=[G_1|G_2]$ is of length $n(C_1)+n(C_2)$ with $d_L(C')\geq d_L(C_1)+d_L(C_2)$.
\end{lem}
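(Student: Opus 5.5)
Both $G_1$ and $G_2$ are in standard form (\ref{G-standard}) with the same type $4^{k_1}2^{k_2}$. So each has exactly $k_1$ rows whose $4$-part is $I_{k_1}$ and $k_2$ rows that are twice a $(0,1)$-vector. I would therefore index codewords of both codes by the same message space $\ZZ_4^{k_1}\times\ZZ_2^{k_2}$. Write $\xv=(\mathbf{a},\mathbf{b})$ with $\mathbf{a}\in\ZZ_4^{k_1}$, and take the entries of $\mathbf{b}$ in $\{0,1\}$ when multiplying the order-two rows. Then every codeword of $C'$ has the form
\[
\xv G'=\bigl(\xv G_1\,|\,\xv G_2\bigr),
\]
so $C'$ is exactly the set of concatenations $(\xv G_1|\xv G_2)$. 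It is linear, since it is the row space of $G'$, and its length is $n(C_1)+n(C_2)$ by construction.

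The key step is to show that $\xv G'\neq\mathbf{0}$ forces both $\xv G_1\neq\mathbf{0}$ and $\xv G_2\neq\mathbf{0}$. For this I would use the standard form directly. The first $k_1+k_2$ coordinates of $\xv G_i$ are $(\mathbf{a},\,\mathbf{a}A_i+2\mathbf{b})$, and these are the same in both blocks when the identity parts sit in the same positions. So if $\xv G_1=\mathbf{0}$, then $\mathbf{a}=\mathbf{0}$ and $2\mathbf{b}=\mathbf{0}$, i.e.\ $\mathbf{b}=\mathbf{0}$ as a $\ZZ_2$-vector, hence $\xv$ is the trivial message and $\xv G_2=\mathbf{0}$ as well. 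Put differently, in standard form the map $\xv\mapsto\xv G_i$ is injective on $\ZZ_4^{k_1}\times\ZZ_2^{k_2}$ for both $i$. Hence every nonzero codeword of $C'$ is a concatenation of a nonzero codeword of $C_1$ and a nonzero codeword of $C_2$.

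Since Lee weight is additive over concatenation, for every nonzero $\xv$ we get
\[
w_L(\xv G')=w_L(\xv G_1)+w_L(\xv G_2)\ge d_L(C_1)+d_L(C_2).
\]
Taking the minimum gives the claimed bound $d_L(C')\ge d_L(C_1)+d_L(C_2)$.

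The main obstacle is making sure $C'$ really has type $4^{k_1}2^{k_2}$ and that the rows of $G'$ stay a minimal generating set. This is exactly the injectivity argument above: $|C'|$ equals the size of the message space, namely $4^{k_1}2^{k_2}$. The length hypothesis $n(C_1)+n(C_2)<4^{k_1}2^{k_2}-1$ is not needed for the inequality itself. I would remark that it only ensures the new code is shorter than the constant-weight code $C^{(k_1,k_2)}$ of Theorem \ref{konstruk}, so the construction is meaningful for producing new codes.
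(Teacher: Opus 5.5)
Your proof is correct. The paper gives no proof beyond calling the construction ``easy to verify'', and your argument supplies exactly that verification: the standard form gives both maps $\xv\mapsto\xv G_i$ the same kernel $\{\mathbf{0}\}\times(2\ZZ_4)^{k_2}$, so every nonzero codeword of $C'$ splits into nonzero codewords of $C_1$ and $C_2$, and Lee weight is additive over concatenation. One harmless slip: the coordinates $\mathbf{a}A_i+2\mathbf{b}$ are not the same in both blocks unless $A_1=A_2$ (only the first $k_1$ coordinates agree), but your kernel argument never uses this.
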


The following construction is a special case of Lemma \ref{konstruk1}. However, we mention it here for the reader's convenience.

\begin{cor}[Construction 1 for optimal codes]\label{opt1}
	If $C$ is a Plotkin-optimal linear code of parameters $[n,4^{k_1}2^{k_2},d_L]$ with generator matrix (in standard form) $G,$ then the linear code $C'$ generated by the matrix $G':=\left[G|G^{(k_1,k_2)}\right]$ is also optimal of length $n+4^{k_1}2^{k_2}-1.$
\end{cor}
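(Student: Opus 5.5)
We plan to prove the statement by combining the distance bound from Construction A (Lemma \ref{konstruk1}) with the constant-weight property of $C^{(k_1,k_2)}$ (Theorem \ref{konstruk}). Then we compare the result against the Plotkin-type bound (Theorem \ref{Plotkin}) at the new length. Throughout we write $M:=4^{k_1}2^{k_2}=|C|$, and we let $n'=n+M-1$ be the length of $C'$.

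\emph{Step 1: $C'$ has the same type as $C$.} The rows of $G'=[G\mid G^{(k_1,k_2)}]$ are indexed compatibly: the $i$-th row of $G$ is free (order 4) or of order 2 exactly as the $i$-th row of $G^{(k_1,k_2)}$ is. This holds because $G^{(k_1,k_2)}$ has its last $k_2$ rows with entries in $\{0,2\}$ and its first $k_1$ rows containing unit entries. So the map $\xv\mapsto \xv G'$ has the same kernel as $\xv\mapsto \xv G$. Hence $|C'|=M$, and each codeword of $C'$ has the form $(\xv G,\xv G^{(k_1,k_2)})$.

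\emph{Step 2: lower bound on $d_L(C')$.} Take a nonzero codeword $(\xv G,\xv G^{(k_1,k_2)})$. Since the kernels agree, $\xv G\neq\mathbf 0$ and $\xv G^{(k_1,k_2)}\neq\mathbf 0$. By Theorem \ref{konstruk}, the second part has Lee weight exactly $M$, and the first part has weight at least $d_L$. Therefore
\[
d_L(C')\ge d_L+M .
\]
This is the Construction A argument. The length hypothesis in Lemma \ref{konstruk1} is not needed for this inequality, so we verify it directly rather than cite the lemma.

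\emph{Step 3: optimality.} Since $C$ is Plotkin-optimal, $d_L=\lfloor \frac{M}{M-1}n\rfloor$. We compute
\[
\frac{M}{M-1}n'=\frac{M}{M-1}n+M .
\]
Because $M$ is an integer,
\[
\left\lfloor \frac{M}{M-1}n'\right\rfloor=\left\lfloor \frac{M}{M-1}n\right\rfloor+M=d_L+M .
\]
Combining this with Theorem \ref{Plotkin} and Step 2 gives
\[
d_L+M\le d_L(C')\le \left\lfloor \frac{M}{M-1}n'\right\rfloor=d_L+M ,
\]
so equality holds and $C'$ is Plotkin-optimal.

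The only delicate point is Step 1: making sure that juxtaposition neither collapses the code nor produces a nonzero $\xv$ with $\xv G=\mathbf 0$ but $\xv G^{(k_1,k_2)}\neq \mathbf 0$. This is handled by the standard form of $G$. In the first $k_1$ coordinates, $\xv G=\mathbf 0$ forces $x_1=\dots=x_{k_1}=0$. In the $2I_{k_2}$ block it then forces $2x_j=0$ for the remaining indices, and the same condition annihilates $\xv G^{(k_1,k_2)}$. The converse direction holds because the columns of $G^{(k_1,k_2)}$ include all the unit vectors $e_i$ and $2e_j$.
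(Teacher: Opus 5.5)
Your proof is correct and follows the paper's argument: the paper likewise adds the constant Lee weight $4^{k_1}2^{k_2}$ of the nonzero codewords of $C^{(k_1,k_2)}$ (Theorem \ref{konstruk}) to $d_L(C)$, notes that the sum equals $\lfloor \frac{|C|}{|C|-1}n(C')\rfloor$, and concludes with Theorem \ref{Plotkin}. Your Step 1 (both generator matrices have kernel $\{0\}^{k_1}\times\{0,2\}^{k_2}$, so every nonzero codeword of $C'$ has both halves nonzero) and your observation that the length hypothesis of Lemma \ref{konstruk1} fails here both supply details the paper leaves implicit.
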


\begin{proof}
	It is clear that $n(C')=n(C)+n\left(C^{(k_1,k_2)}\right)=n+4^{k_1}2^{k_2}-1$ and $|C|=|C'|$. Moreover, since the Lee weight of any nonzero codeword in $C^{(k_1,k_2)}$ is $4^{k_1}2^{k_2}$ (Theorem \ref{konstruk}), then
	$d_L(C')\geq d_L(C)+d_L\left(C^{(k_1,k_2)}\right)=\left\lfloor\frac{|C|}{|C|-1}n(C')\right\rfloor$. Therefore, $C'$ is optimal by Theorem \ref{Plotkin}.
\end{proof}

\begin{cor}[Construction 2 for optimal codes]\label{opt2}
	If $C_1$ and $C_2$ are Plotkin-optimal linear codes of type $4^{k_1}2^{k_2}$ and $n(C_1)+n(C_2)< 4^{k_1}2^{k_2}-1$ having generator matrices (in standard form) $G_1$ and $G_2,$ respectively, then the linear code $C'$ generated by the matrix $G':=[G_1|G_2]$ is also optimal of length $n(C_1)+n(C_2).$
\end{cor}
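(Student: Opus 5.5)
The plan is to reduce everything to the arithmetic of the Plotkin-type bound in the regime of short lengths, and then apply Construction A (Lemma \ref{konstruk1}). Write $M:=4^{k_1}2^{k_2}$ for the common size of $C_1$ and $C_2$, and $n_i:=n(C_i)$.

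\textbf{Step 1: evaluate the bound for short lengths.} For any length $n$ with $n<M-1$ we have
\[
\frac{M}{M-1}n = n+\frac{n}{M-1}, \qquad 0\le \frac{n}{M-1}<1,
\]
so $\left\lfloor \frac{M}{M-1}n\right\rfloor = n$.
\begin{itemize}
\item Since $n_1,n_2\ge 1$ and $n_1+n_2<M-1$, each of $n_1$, $n_2$ and $n_1+n_2$ is strictly less than $M-1$.
\item Plotkin-optimality of $C_i$ therefore means exactly $d_L(C_i)=n_i$.
\item The target value for $C'$ is $\left\lfloor\frac{M}{M-1}(n_1+n_2)\right\rfloor=n_1+n_2$.
\end{itemize}
This also removes the usual worry that $\lfloor a\rfloor+\lfloor b\rfloor$ may fall one short of $\lfloor a+b\rfloor$: all three floors here equal the lengths themselves.

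\textbf{Step 2: check that $C'$ has the same type.} I would verify $|C'|=M$, since Construction A and the Plotkin bound for $C'$ both refer to $|C'|$. The map $C'\to C_1$, $\xv G'\mapsto \xv G_1$, is surjective, so it suffices to show it is injective. Suppose $\xv G_1=\mathbf{0}$.
\begin{itemize}
\item Because $G_1$ is in standard form (\ref{G-standard}), the identity block $I_{k_1}$ forces $x_1=\dots=x_{k_1}=0$.
\item The block $2I_{k_2}$ then forces $x_{k_1+1},\dots,x_{k_1+k_2}\in\{0,2\}$.
\item Since $G_2$ is also in standard form of the same type, its last $k_2$ rows have entries in $\{0,2\}$. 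Hence $\xv G_2=\mathbf{0}$ as well.
\end{itemize}
So the map is injective and $C'$ has type $4^{k_1}2^{k_2}$. I expect this bookkeeping to be the only point requiring care, because the statement quietly relies on the two standard forms being of the same type.

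\textbf{Step 3: combine the bounds.} By Lemma \ref{konstruk1}, $d_L(C')\ge d_L(C_1)+d_L(C_2)=n_1+n_2$. By Theorem \ref{Plotkin} applied to $C'$,
\[
d_L(C')\le\left\lfloor\frac{M}{M-1}(n_1+n_2)\right\rfloor=n_1+n_2.
\]
Hence $d_L(C')=n_1+n_2$ and $C'$ is Plotkin-optimal, which is the claim of Corollary \ref{opt2}.
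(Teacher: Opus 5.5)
Your argument is correct and is essentially the paper's proof, which just states that the result is straightforward from Lemma~\ref{konstruk1} and Theorem~\ref{Plotkin}. You have made explicit the two points the paper leaves to the reader: first, that $\left\lfloor \frac{M}{M-1}n\right\rfloor = n$ whenever $n<M-1$, so $d_L(C_i)=n_i$ and the bound for $C'$ is exactly $n_1+n_2$; second, that $C'$ keeps type $4^{k_1}2^{k_2}$ because the last $k_2$ rows of both standard-form matrices have entries in $\{0,2\}$.
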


\begin{proof}
    Straightforward from Lemma \ref{konstruk1} and Theorem \ref{Plotkin}.
\end{proof}

By using a similar idea as in proving Theorem \ref{konstruk}, we obtain the following result.

\begin{thm}[Construction B]\label{konstruk2}
	If $C$ is a linear code of parameters $[n,4^{k_1}2^{k_2},d_L]$ with generator matrix (in standard form) $G,$ then the code $C'_1$ generated by the matrix
	\begin{equation}\label{matrix1}
	G'_1:=\arraycolsep2.5 pt \left[\begin{array}{@{}c|c|c|c@{}}
		G & G & G & G \\
		\hline
		0\;0\dots 0 &1\;1\dots 1 &2\;2\dots 2 &3\;3\dots 3
	\end{array}\;\right]
	\end{equation}
	is linear of parameters   $[4n,4^{k_1+1}2^{k_2},d_L(C'_1)]$ and the code $C'_2$ generated by
	\begin{equation}\label{matrix2}
	    G'_2:=\arraycolsep2.5 pt\left[\begin{array}{@{}c|c@{}}
		G & G \\
		\hline
		0\;0\dots 0  &2\;2\dots 2
	    \end{array}\;\right]
	\end{equation}
	is linear of parameters $[2n,4^{k_1}2^{k_2+1},d_L(C'_2)],$ with $d_L(C'_1)\geq \min\{4n,4d_L\}$ and $d_L(C'_2)\geq \min\{2n,2d_L\}.$
\end{thm}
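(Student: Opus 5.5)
The plan is to write a general codeword of $C'_1$ (resp. $C'_2$) explicitly in terms of a codeword of $C$ and the coefficient on the new last row. Then we split into two cases according to whether that coefficient kills the contribution of the last row. The parameters $4n$ and $2n$ for the lengths are immediate from the block structure.

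For the type, first note that $G'_1$ can be brought to standard form. Move one column of the second block (where the new row has entry $1$) to the front, and clear the entries of $G$ above it by row operations with the new row. This gives a standard form with $k_1+1$ rows of the first kind, so $|C'_1|=4^{k_1+1}2^{k_2}$. Alternatively, one checks directly that $(\xv,x)\mapsto (\xv,x)G'_1$ has the same kernel structure as $\xv\mapsto \xv G$ in the first $k_1+k_2$ coordinates, while $x$ is recovered freely in $\ZZ_4$ from the difference of the first two blocks. Similarly, for $G'_2$ the new row is $[0\cdots0\,|\,2\cdots2]$. This contributes a factor $2$ (the multiples $0,2$ are distinct since $n\ge1$), so $|C'_2|=4^{k_1}2^{k_2+1}$.

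For the distance of $C'_1$, a codeword has the form $\yv=(\cv,\cv+a\mathbf 1,\cv+2a\mathbf 1,\cv+3a\mathbf 1)$ with $\cv\in C$, $a\in\ZZ_4$, where $\mathbf 1$ is the all-ones vector of length $n$. If $a=0$ and $\cv\neq\mathbf 0$, then $w_L(\yv)=4w_L(\cv)\ge 4d_L$. If $a\neq 0$, we argue coordinatewise. For each $i$, the entries $c_i, c_i+a, c_i+2a, c_i+3a$ run over $c_i+\{0,a,2a,3a\}$. When $a$ is a unit this is all of $\ZZ_4$, with Lee weight sum $4$; this is exactly the $k=1$ case of Lemma \ref{lemma1}. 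When $a=2$, the multiset is $\{c_i,c_i+2,c_i,c_i+2\}$, whose Lee weight sum is $2(w_L(c_i)+w_L(c_i+2))=2\cdot 2=4$. So every coordinate contributes exactly $4$, and $w_L(\yv)=4n$. Hence $d_L(C'_1)\ge\min\{4n,4d_L\}$.

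For $C'_2$, a codeword has the form $(\cv,\cv+2b\mathbf 1)$ with $b\in\{0,1\}$. For $b=0$ the weight is $2w_L(\cv)\ge 2d_L$. For $b=1$, each coordinate contributes $w_L(c_i)+w_L(c_i+2)=2$, giving weight $2n$. The only mildly delicate point is the counting of $|C'_1|$ and $|C'_2|$, that is, checking that the new row genuinely increases the type. The weight computation itself is the per-coordinate identity $\sum_{t}w_L(c+ta)=4$, which is a direct check on $\ZZ_4$.
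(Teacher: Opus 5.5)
Your proof is correct and follows essentially the argument the paper has in mind. The paper writes out no proof and says only that the result follows by ``a similar idea as in proving Theorem~\ref{konstruk}'': each column of $G$ is grouped with its copies carrying last entries $0,1,2,3$ (resp.\ $0,2$), and one uses the per-coordinate identity $\sum_{t}w_L(c+ta)=4$ for $a\neq 0$ from the proof of Lemma~\ref{lemma1}. That is exactly your case split on the coefficient of the new row, and your injectivity check for $(\cv,a)\mapsto(\cv,\cv+a\mathbf{1},\dots)$ supplies the type count that the paper leaves implicit.
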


\begin{cor}[Construction 3 for optimal codes]\label{opt3}
	If $C$ is a Plotkin-optimal linear code of parameters $[n,4^{k_1}2^{k_2},d_L]$ and $n(C)< 4^{k_1}2^{k_2}-1$ with generator matrix (in standard form) $G,$ then the linear code $C'_1$ generated by the matrix $G'_1$ (\ref{matrix1})
	is optimal with parameters $[4n,4^{k_1+1}2^{k_2},4n]$ and the linear code $C'_2$ generated by the matrix $G'_2$ (\ref{matrix2})
	is also optimal with parameters $[2n,4^{k_1}2^{k_2+1},2n].$
\end{cor}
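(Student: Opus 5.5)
The plan is to pin down the value of $d_L$ forced by Plotkin-optimality in the short-length regime, and then feed that value into Theorem \ref{konstruk2} (Construction B). The bound of Theorem \ref{Plotkin} applied to the new codes should then match the lower bound exactly.

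\emph{Step 1: compute $d_L$.} Write $M:=|C|=4^{k_1}2^{k_2}$. The hypothesis $n<M-1$ gives $0\le n/(M-1)<1$, so
\[
\frac{M}{M-1}n=n+\frac{n}{M-1}\in[n,n+1).
\]
Hence $\left\lfloor \frac{M}{M-1}n\right\rfloor=n$. Plotkin-optimality of $C$ then gives $d_L(C)=n$.

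\emph{Step 2: lower bounds.} Theorem \ref{konstruk2} says $C'_1$ is linear with parameters $[4n,4^{k_1+1}2^{k_2},d_L(C'_1)]$. It also gives $d_L(C'_1)\ge\min\{4n,4d_L\}=4n$. Likewise $C'_2$ is linear with parameters $[2n,4^{k_1}2^{k_2+1},d_L(C'_2)]$, and $d_L(C'_2)\ge\min\{2n,2n\}=2n$.

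\emph{Step 3: matching upper bounds.} For $C'_1$ we have $|C'_1|=4M$, and Theorem \ref{Plotkin} gives
\[
d_L(C'_1)\le \frac{4M}{4M-1}\,4n=4n+\frac{4n}{4M-1}.
\]
From $n<M-1$ we get $4n<4M-4<4M-1$, so the fraction is less than $1$. Since $d_L(C'_1)$ is an integer, $d_L(C'_1)\le 4n$.

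For $C'_2$ we have $|C'_2|=2M$, and Theorem \ref{Plotkin} gives $d_L(C'_2)\le 2n+\frac{2n}{2M-1}$. Here $2n<2M-2<2M-1$, so again the fraction is less than $1$ and $d_L(C'_2)\le 2n$.

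Combining Steps 2 and 3 gives $d_L(C'_1)=4n=\lfloor\frac{|C'_1|}{|C'_1|-1}4n\rfloor$ and $d_L(C'_2)=2n=\lfloor\frac{|C'_2|}{|C'_2|-1}2n\rfloor$. So both codes are Plotkin-optimal with the stated parameters.

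The only point needing care is Step 1: the hypothesis $n<M-1$ is exactly what makes $d_L=n$, and hence what makes the two arguments of the minimum in Theorem \ref{konstruk2} coincide. The rest is routine arithmetic with floors. I also take the sizes $|C'_1|=4M$ and $|C'_2|=2M$ directly from the parameters asserted in Theorem \ref{konstruk2}.
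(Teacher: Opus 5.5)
Your proof is correct and follows the same route as the paper, which just says the result is straightforward from Theorem \ref{Plotkin} and Theorem \ref{konstruk2}. You supply the arithmetic the paper omits: $n<4^{k_1}2^{k_2}-1$ forces $d_L=n$, so both minima in Construction B reduce to the new lengths, and the Plotkin bounds for $C'_1$ and $C'_2$ round down to exactly $4n$ and $2n$.
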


\begin{proof}
	
Straightforward from Theorem \ref{Plotkin} and Theorem \ref{konstruk2}.
\end{proof}

It is easy to see that the linear codes of parameters $[1,4^1 2^0, 1]$ and $[4,4^2 2^0,4]$ are both Plotkin-optimal. By applying Construction 3 as given in Corollary \ref{opt3} repeatedly, we obtain a class of optimal codes as follows.

\begin{cor}\label{opt4}
	Let $k_1,k_2$ be nonnegative integers with $k_1+k_2>0$. Then there exist an optimal linear code of parameters $[4^{k_1}2^{k_2},$ $4^{k_1+1}2^{k_2},4^{k_1}2^{k_2}]$.
\end{cor}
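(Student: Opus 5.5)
The plan is to build the required code by induction, starting from the trivial code $[1,4^12^0,1]$ generated by $G=[1]$, and applying Construction B (Theorem \ref{konstruk2}) $k_1$ times in its first form and $k_2$ times in its second form. I will not rely on Corollary \ref{opt3} as a black box. Instead I will track one invariant through the induction: the current code has parameters $[N,4^{a+1}2^{b},N]$ with $N=4^{a}2^{b}$. That is, its minimum Lee distance equals its length.

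\emph{Base case.} Take $a=b=0$. The code generated by $[1]$ is all of $\ZZ_4$, its minimum Lee weight is $1$, and so its parameters are $[1,4^12^0,1]$.

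\emph{Inductive steps.} Suppose $C$ is a code with parameters $[N,4^{a+1}2^b,N]$, written with a generator matrix $G$ in standard form.
\begin{itemize}
\item Form $G'_1$ as in (\ref{matrix1}). By Theorem \ref{konstruk2}, the resulting code $C'_1$ has length $4N$ and type $4^{a+2}2^b$. Its distance satisfies $d_L(C'_1)\ge \min\{4N,4N\}=4N$.
\item Form $G'_2$ as in (\ref{matrix2}). Again by Theorem \ref{konstruk2}, the resulting code $C'_2$ has parameters $[2N,4^{a+1}2^{b+1},\ge 2N]$.
\end{itemize}
In both cases the distance is at least the length. A Lee weight can never exceed twice the length, so this is consistent, and the Plotkin step below supplies the matching upper bound. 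The new code's generator matrix need not be in standard form. However, any linear code is permutation-equivalent to one with a standard-form generator matrix, and permuting coordinates preserves both the type and the Lee weights. So the invariant is restored, with $N$ multiplied by $4$ or by $2$ and the exponent $a$ or $b$ increased by one. Performing $k_1$ steps of the first kind and $k_2$ of the second yields a code with parameters $[M,4^{k_1+1}2^{k_2},d]$, where $M=4^{k_1}2^{k_2}$ and $d\ge M$.

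\emph{Optimality.} Write $|C|=4M$. By Theorem \ref{Plotkin},
\[
d\le \frac{4M}{4M-1}\,M=M+\frac{M}{4M-1}.
\]
Since $0<M<4M-1$ for $M\ge1$, the fractional term lies strictly between $0$ and $1$. Hence $\lfloor |C|M/(|C|-1)\rfloor=M$, which forces $d=M$, and the code is Plotkin-optimal. The same inequality $M<4M-1$ also confirms that the hypothesis $n(C)<|C|-1$ of Corollary \ref{opt3} holds at every stage, so the argument matches the route indicated in the paper.

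\emph{Main obstacle.} The only delicate point is that Theorem \ref{konstruk2} really delivers the claimed type. This requires the new last row to be independent of the old rows: of order $4$ for $G'_1$ and of order $2$ for $G'_2$. I take this from the statement of Theorem \ref{konstruk2} itself. If I had to verify it directly, I would argue as follows. A combination of rows that vanishes on every block must, by comparing the first block with the others, have zero coefficient on the new row in $\ZZ_4$ (respectively in $2\ZZ_4$). The remaining coefficients then give a vanishing combination of the rows of $G$, and these are independent because $G$ is a generator matrix of $C$.
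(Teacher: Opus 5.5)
Your proof is correct and takes the same route as the paper: start from the Plotkin-optimal $[1,4^12^0,1]$ code and iterate Construction~3 (Corollary~\ref{opt3}), whose proof you have inlined via Theorem~\ref{konstruk2} and the computation $\lfloor 4M\cdot M/(4M-1)\rfloor=M$. One wording fix in your last paragraph: rows of a generator matrix over $\ZZ_4$ need not be ``independent'' (twice an order-$2$ row vanishes), but your block comparison does show that $(\mathbf{c},x)\mapsto(\mathbf{c},\mathbf{c}+x\mathbf{1},\mathbf{c}+2x\mathbf{1},\mathbf{c}+3x\mathbf{1})$ is injective on $C\times\ZZ_4$ and $(\mathbf{c},y)\mapsto(\mathbf{c},\mathbf{c}+y\mathbf{1})$ is injective on $C\times 2\ZZ_4$, which is all the type count needs.
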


\begin{rem}
	\emph{Zinoviev and Zinoviev \cite{Zinoviev} have also constructed codes over $\ZZ_4$ of parameters $(n,4n,n)$. Their construction utilizes Hadamard matrices of order $n.$ However, our construction above is simpler and produces linear codes.}
\end{rem}


Next, we derive another method to construct optimal codes.
\begin{thm}\label{konstruk4}
	The linear code $C$ having a generator matrix $G$ given by
	\[
	G:=\arraycolsep2.5 pt
    \left[\begin{array}{@{}c|c|c| c c c@{}}
		G^{(k_1,0)} & G^{(k_1,0)} & G^{(k_1,0)} &\mathbf{0} &\mathbf{0} &\mathbf{0} \\\hline
		1\;1\dots 1 &2\;2\dots 2 &3\;3\dots 3 &1 &2 &3
	\end{array}\;\right]
	\]
	is optimal of parameters $[3.4^{k_1},4^{k_1+1}2^0,3.4^{k_1}]$.
\end{thm}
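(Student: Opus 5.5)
The idea is to view $G$ as the generator matrix $G^{(k_1+1,0)}$ with its first block removed. The deleted columns are the $4^{k_1}-1$ columns $\left[\begin{smallmatrix}\cv\\ 0\end{smallmatrix}\right]$ with $\cv\in\ZZ_4^{k_1}$ nonzero. Then I would compute every codeword weight exactly, using Theorem \ref{konstruk} and Lemma \ref{lemma1}.

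\textbf{Length and type.} The length is $3(4^{k_1}-1)+3=3\cdot 4^{k_1}$. For the type, I would note that $G$ contains the column $(0,\dots,0,1)^T$ and, for each $i$, the column $(\mathbf{e}_i,1)^T$, where $\mathbf{e}_i$ is the $i$-th unit vector of $\ZZ_4^{k_1}$. Subtracting the first column from the others gives all unit vectors $\mathbf{e}_1,\dots,\mathbf{e}_{k_1+1}$ of $\ZZ_4^{k_1+1}$ as $\ZZ_4$-combinations of columns. Hence $\xv G=\mathbf{0}$ forces $\xv=\mathbf{0}$, so $\xv\mapsto\xv G$ is injective on $\ZZ_4^{k_1+1}$. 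Thus $C$ is free of type $4^{k_1+1}2^0$.

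\textbf{Weights.} Take a codeword $(\xv,t)G$ with $\xv\in\ZZ_4^{k_1}$ and $t\in\ZZ_4$, not both zero. Its Lee weight is
\[
w_L\big((\xv,t)G^{(k_1+1,0)}\big)-\sum_{\mathbf{0}\neq\cv\in\ZZ_4^{k_1}} w_L(\xv\cdot\cv).
\]
By Theorem \ref{konstruk}, the first term equals $4^{k_1+1}$. By Lemma \ref{lemma1}, the subtracted sum equals $4^{k_1}$ when $\xv\neq\mathbf{0}$: the term $\cv=\mathbf{0}$ contributes nothing, and Lemma \ref{lemma1} is symmetric in the roles of $\xv$ and $\cv$. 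When $\xv=\mathbf{0}$ the sum is $0$. So every nonzero codeword has weight $3\cdot 4^{k_1}$ if $\xv\neq\mathbf{0}$, and $4^{k_1+1}$ if $\xv=\mathbf{0}$. Since codewords with $\xv\neq\mathbf{0}$ exist, $d_L=3\cdot4^{k_1}$.

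\textbf{Optimality.} By Theorem \ref{Plotkin},
\[
d_L\le \frac{|C|}{|C|-1}n = n+\frac{n}{4^{k_1+1}-1}.
\]
Here $n/(4^{k_1+1}-1)=3\cdot4^{k_1}/(4^{k_1+1}-1)<1$, so the floor of the bound is exactly $n=3\cdot4^{k_1}$, and $C$ is Plotkin-optimal.

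The only delicate step is the weight decomposition. I would double-check it by writing the columns of $G^{(k_1+1,0)}$ as $\left[\begin{smallmatrix}\cv\\ a\end{smallmatrix}\right]$ with $(\cv,a)\neq\mathbf{0}$. The deleted ones are precisely those with $a=0$, and on such a column $(\xv,t)\cdot(\cv,0)=\xv\cdot\cv$, so no dependence on $t$ is lost.
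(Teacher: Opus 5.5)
Your proof is correct and follows the same route as the paper. Both regard $G$ as $G^{(k_1+1,0)}$ with the columns $(\cv,0)^T$, $\cv\neq\mathbf{0}$, deleted, so each codeword weight is $4^{k_1+1}$ minus either $0$ or $4^{k_1}$ (Theorem \ref{konstruk}, Lemma \ref{lemma1}), giving nonzero weights $3\cdot 4^{k_1}$ and $4^{k_1+1}$; your explicit checks that $C$ is free of type $4^{k_1+1}2^0$ and that $\lfloor\frac{|C|}{|C|-1}n\rfloor=n$ (which needs $k_1\ge 1$) fill in steps the paper leaves implicit.
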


\begin{proof}
	Since the code $C^{(k_1+1,0)}$ has a generator matrix
	\[
	G^{(k_1+1,0)}:=
	\arraycolsep2.5 pt
    \left[\begin{array}{@{}c|c|c|c| c c c@{}}
		G^{(k_1,0)} & G^{(k_1,0)} & G^{(k_1,0)} & G^{(k_1,0)} &\mathbf{0} &\mathbf{0} &\mathbf{0} \\\hline
		0\;0\dots 0 &1\;1\dots 1 &2\;2\dots 2 &3\;3\dots 3 &1 &2 &3
	\end{array}\;\right],\]
	then for any $\mathbf{c}\in C,$ we have $w_L(\mathbf{c})=w_L(\mathbf{c_1})-w_L(\mathbf{c_2})$
	for some $\mathbf{c_1}\in C^{(k_1+1,0)}$ and $\mathbf{c_2}\in C^{(k_1,0)}.$ Since the Lee weight of all nonzero codewords in $C^{(k_1+1,0)}$ is $4^{k_1+1}$ and the Lee weight of all nonzero codewords in $C^{(k_1,0)}$ is $4^{k_1}$ (Theorem \ref{konstruk}), then the Lee weight of any codeword in $C$ is either $0,$ $3.4^{k_1},$ or $4^{k_1+1}.$  Therefore, $d_L(C)=3.4^{k_1}.$
\end{proof}

\begin{rem}
	\emph{The Theorem \ref{konstruk4} above gives us a way to obtain free optimal two-weight codes (codes with exactly two nonzero weights).}
\end{rem}

By applying Corollary \ref{opt3} to the code in Theorem \ref{konstruk4}, or by attaching three codes of parameters
$[4^{k_1}2^{k_2},4^{k_1+1}2^{k_2},4^{k_1}2^{k_2}]$ (see Corollary \ref{opt4}) using Corollary \ref{opt2}, we obtain another class of optimal codes.

\begin{cor}
	Let $k_1,k_2$ be nonnegative integers with $k_1+k_2>0$. Then there exist an optimal linear code of parameters $[3.4^{k_1}2^{k_2},$ $4^{k_1+1}2^{k_2},3.4^{k_1}2^{k_2}].$  
\end{cor}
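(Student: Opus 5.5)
We need an optimal $[3\cdot4^{k_1}2^{k_2},\,4^{k_1+1}2^{k_2},\,3\cdot4^{k_1}2^{k_2}]$ code. The plan is to start from the code of Theorem \ref{konstruk4} and then increase $k_2$ by repeated use of the doubling construction $G'_2$ of Theorem \ref{konstruk2}.

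Start from the code of Theorem \ref{konstruk4}, which is an optimal $[3\cdot 4^{k_1},4^{k_1+1}2^0,3\cdot4^{k_1}]$ code. This settles the case $k_2=0$ for every $k_1\ge 0$. For $k_1=0$ it is the code generated by $[1\;2\;3]$, i.e.\ $C^{(1,0)}$.

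Next, argue by induction on $k_2$. Suppose $C$ has parameters $[n,|C|,n]$ with $n=3\cdot4^{k_1}2^{k_2}$ and $|C|=4^{k_1+1}2^{k_2}$. Take a generator matrix $G$ of $C$ in standard form and form $G'_2$ as in (\ref{matrix2}). By Theorem \ref{konstruk2}, the resulting code $C'_2$ has length $2n$, size $2|C|$, and
\[
d_L(C'_2)\ge\min\{2n,2d_L\}=2n=3\cdot4^{k_1}2^{k_2+1}.
\]

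It remains to check optimality, i.e.\ that $2n=\lfloor \frac{2|C|}{2|C|-1}\cdot 2n\rfloor$. Since $2n<2|C|-1$ (indeed $3\cdot4^{k_1}2^{k_2}<4^{k_1+1}2^{k_2}$), we have $0<\frac{2n}{2|C|-1}<1$. Hence
\[
\frac{2|C|}{2|C|-1}\cdot 2n=2n+\frac{2n}{2|C|-1}
\]
has floor $2n$, and Theorem \ref{Plotkin} gives $d_L(C'_2)\le 2n$. So $C'_2$ is Plotkin-optimal with the required parameters.

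One subtlety is that Corollary \ref{opt3} is stated with the hypothesis $n<|C|-1$. Rather than citing it, I would apply Theorem \ref{konstruk2} directly as above, where only $d_L=n$ matters.

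An alternative route is to concatenate three copies of the code from Corollary \ref{opt4} via Lemma \ref{konstruk1}, giving distance at least $3\cdot 4^{k_1}2^{k_2}$. This requires the length condition in Lemma \ref{konstruk1}, which fails here since $3\cdot4^{k_1}2^{k_2}$ is not smaller than $4^{k_1+1}2^{k_2}-1$ in general. The weight additivity itself holds regardless, so the argument would still go through. I would present the doubling route as the main proof.

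The only real obstacle is the Plotkin floor check, which is the elementary inequality above.
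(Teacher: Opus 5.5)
Your proposal is correct and follows the paper's own first route: the paper obtains these codes by applying the doubling construction of Corollary~\ref{opt3} (that is, $G'_2$ from Theorem~\ref{konstruk2} combined with Theorem~\ref{Plotkin}) to the code of Theorem~\ref{konstruk4}, and it also mentions the three-copy concatenation of Corollary~\ref{opt4} codes that you sketch as an alternative. There are two harmless slips: for $k_1=0$ the code generated by $[1\;2\;3]$ is $C^{(1,0)}$, which has constant Lee weight $4$ and hence minimum distance $4$, not $3$, so your induction hypothesis should read $d_L\ge n$ (all the step uses); and in the alternative route the length condition of Lemma~\ref{konstruk1} and Corollary~\ref{opt2} actually holds, since $3\cdot4^{k_1}2^{k_2}<4^{k_1+1}2^{k_2}-1$ is equivalent to $4^{k_1}2^{k_2}>1$, i.e.\ to $k_1+k_2>0$.
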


We end this part by deriving a construction method for free optimal linear codes.

\begin{cor}[Construction 4 for optimal codes]\label{opt6}
	If $C$ is a Plotkin-optimal linear code of type $4^{k_1} 2^0,$ length $n<4^{k_1}-1$ with generator matrix (in standard form) $G,$ then the linear code $C'$ with generator matrix
	\[
	G':=\arraycolsep2.5 pt
    \left[
	\begin{array}{@{}c|c|c|c| c c c@{}}
		G & G^{(k_1,0)} & G^{(k_1,0)} & G^{(k_1,0)} &\mathbf{0} & \mathbf{0} & \mathbf{0} \\\hline
		0\;0\dots 0 &1\;1\dots 1 &2\;2\dots 2 &3\;3\dots 3 &1 &2 &3
	\end{array}\;
	\right]
	\]
	is optimal of parameters $[3.4^{k_1}+n,4^{k_1+1}2^{0},3.4^{k_1}+n].$
\end{cor}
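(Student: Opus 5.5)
The plan is to compare the Lee weight of a codeword of $C'$ with the code of Theorem \ref{konstruk4}, and then show that the resulting lower bound meets the Plotkin-type bound (\ref{PLDB}) of Theorem \ref{Plotkin}.

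\emph{Step 1: the minimum distance of $C$.} Since $C$ is Plotkin-optimal of type $4^{k_1}2^0$ and $n<4^{k_1}-1$, we have
\[
d_L(C)=\left\lfloor \frac{4^{k_1}}{4^{k_1}-1}n\right\rfloor = n+\left\lfloor \frac{n}{4^{k_1}-1}\right\rfloor = n .
\]
So every nonzero $\xv' G$ with $\xv'\in\ZZ_4^{k_1}$ has Lee weight at least $n$. In particular $\xv'\mapsto \xv' G$ is injective.

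\emph{Step 2: the weight of a codeword of $C'$.} Write a message as $\xv=(\xv',x)$ with $\xv'\in\ZZ_4^{k_1}$ and $x\in\ZZ_4$. Then
\[
w_L(\xv G')=w_L(\xv' G)+w_L(\xv R),
\]
where $R$ is the matrix formed by the last $3\cdot 4^{k_1}+3$ columns of $G'$. The matrix $R$ is exactly the generator matrix of Theorem \ref{konstruk4}. As in its proof, $w_L(\xv R)=w_L(\xv G^{(k_1+1,0)})-w_L(\xv' G^{(k_1,0)})$, and Theorem \ref{konstruk} gives the constant weights $4^{k_1+1}$ and $4^{k_1}$. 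I split into three cases.
\begin{itemize}
\item If $\xv'\neq\mathbf{0}$, then $w_L(\xv R)=4^{k_1+1}-4^{k_1}=3\cdot 4^{k_1}$, whatever $x$ is. Together with Step 1 this gives $w_L(\xv G')\ge 3\cdot 4^{k_1}+n$.
\item If $\xv'=\mathbf{0}$ and $x\neq 0$, then $\xv' G=\mathbf{0}$ and $w_L(\xv R)=4^{k_1+1}$. This is at least $3\cdot 4^{k_1}+n$ because $n<4^{k_1}$.
\item If $\xv=\mathbf{0}$, the codeword is zero.
\end{itemize}
Hence $\xv\mapsto \xv G'$ is injective, so $|C'|=4^{k_1+1}$ and $C'$ is free of type $4^{k_1+1}2^0$. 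Moreover $d_L(C')\ge 3\cdot 4^{k_1}+n$.

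\emph{Step 3: the matching upper bound.} The length of $C'$ is $N=n+3\cdot 4^{k_1}+3$. Theorem \ref{Plotkin} bounds $d_L(C')$ by $\lfloor 4^{k_1+1}N/(4^{k_1+1}-1)\rfloor$, which is only about $N$; that would leave a gap of $3$. The sharper form in the proof of Theorem \ref{Plotkin} removes this gap. The three columns $(\mathbf{0},1),(\mathbf{0},2),(\mathbf{0},3)$ are not zero columns, so we cannot use $\mu(\mathbf{0})$ directly. Instead I would show the equality $w_L(\xv G')=3\cdot 4^{k_1}+n$ is actually attained.

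Averaging over $\xv'\neq\mathbf{0}$ with $x=0$ does this. By Lemma \ref{Constant summation} applied to $C$ (whose standard-form $G$ has no zero column), $\sum_{\xv'} w_L(\xv' G)=4^{k_1}n$. This sum has $4^{k_1}-1$ nonzero terms, each at least $n$. So some nonzero $\xv'$ has $w_L(\xv' G)\le 4^{k_1}n/(4^{k_1}-1)<n+1$, that is, $w_L(\xv' G)=n$. Taking $x=0$ for this $\xv'$ gives a codeword of weight exactly $3\cdot 4^{k_1}+n$, so $d_L(C')=3\cdot 4^{k_1}+n$.

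Optimality then follows from Theorem \ref{Plotkin} applied to $C'$. I expect this last step to be the main obstacle: the paper uses "optimal" relative to (\ref{PLDB}), and the naive floor of the bound exceeds $3\cdot 4^{k_1}+n$ by up to $3$. I would first try to argue optimality through the Corollary \ref{opt3}/Theorem \ref{konstruk4} style, matching the parameters $[3\cdot4^{k_1}+n,\,4^{k_1+1},\,3\cdot 4^{k_1}+n]$ with the paper's convention. If that fails, I would check whether the intended length is $3\cdot 4^{k_1}+n$, i.e.\ whether the three columns $(\mathbf 0,1),(\mathbf 0,2),(\mathbf 0,3)$ are meant to be counted differently. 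With that length, the bound equals $3\cdot4^{k_1}+n+\lfloor (3\cdot4^{k_1}+n)/(4^{k_1+1}-1)\rfloor=3\cdot 4^{k_1}+n$, using $n<4^{k_1}-1$, and optimality is immediate.
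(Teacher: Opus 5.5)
Your Steps 1 and 2 follow the paper's argument. The paper likewise splits $w_L(\cv')=w_L(\cv)+w_L(\cv^*)$ with $\cv^*$ in the code of Theorem \ref{konstruk4}. Your case analysis there is correct, including the check that $C'$ is free of type $4^{k_1+1}2^0$.

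The gap is in Step 3, and it comes from a column miscount. The columns of $G^{(k_1,0)}$ are the \emph{nonzero} vectors of $\ZZ_4^{k_1}$, so it has $4^{k_1}-1$ columns (Theorem \ref{konstruk}). Hence $R$ has $3(4^{k_1}-1)+3=3\cdot 4^{k_1}$ columns, not $3\cdot4^{k_1}+3$. This agrees with the length $3\cdot 4^{k_1}$ in Theorem \ref{konstruk4}. So the length of $C'$ is $N=n+3\cdot4^{k_1}$, exactly as in the statement.

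With the right length the proof closes immediately. Since $n<4^{k_1}-1$, you get $N<4^{k_1+1}-1$. Theorem \ref{Plotkin} then gives $d_L(C')\le N+N/(4^{k_1+1}-1)<N+1$, so $d_L(C')\le N$. Combined with Step 2, $d_L(C')=N=\lfloor 4^{k_1+1}N/(4^{k_1+1}-1)\rfloor$, which is Plotkin-optimality. That single line is the paper's entire closing step.

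As written, your main line in Step 3 does not prove the statement; it points the other way. With your length $n+3\cdot4^{k_1}+3$, exhibiting a codeword of weight exactly $3\cdot 4^{k_1}+n$ via the averaging argument would show $C'$ is at least $3$ below the floor of (\ref{PLDB}). That would mean $C'$ is \emph{not} Plotkin-optimal. The correct computation appears only as a conditional fallback (``check whether the intended length is\dots''). You need to commit to it by recounting the columns. Once you do, the averaging argument is unnecessary, because the Plotkin bound itself already caps $d_L(C')$ at $N$.

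One minor slip: in Step 1, ``every nonzero $\xv'G$ has weight at least $n$'' does not imply that $\xv'\mapsto\xv'G$ is injective. Injectivity comes from the $I_{k_1}$ block of the standard form. You need it in the first case of Step 2 to conclude $w_L(\xv'G)\ge n$ whenever $\xv'\neq\mathbf{0}$.
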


\begin{proof}
	Observe that every nonzero codeword $\mathbf{c'}\in C'$ satisfies
	$w_L(\mathbf{c'})=w_L(\mathbf{c})+w_L(\mathbf{c^*})$
	for some $\cv \in C,$ where $\cv^*$ is a codeword of Lee weight $3.4^{k_1}$ or $4^{k_1+1}$ (Theorem \ref{konstruk4}). Here, if $w_L(\cv)=0$ then $w_L(\cv^*)=4^{k_1+1},$ which implies
	$w_L(\cv')=w_L(\cv)+w_L(\cv^*)=4^{k_1+1}$.
	Moreover, if $w_L(\cv)>0,$ then
	$w_L(\cv')=w_L(\cv)+w_L(\cv^*)\geq n+3.4^{k_1}.$
	Hence, $d_L\geq 3.4^{k_1}+n.$ By Theorem \ref{Plotkin}, $C'$ is optimal.
\end{proof}

\subsection{Optimal codes for $k_1=2, ~k_2=0$.}

In this subsection, we determine all optimal codes with $k_1=2, k_2=0.$ In Table \ref{tab1}, we compare the codes constructed by our methods with the codes having the highest minimum Lee distance in the database \cite{Aydin2}. By applying Lemma \ref{konstruk1} on the codes in Table \ref{tab1}, we obtain all optimal codes with $k_1=2, k_2=0$, as presented in Table \ref{tab2}. We conclude that the difference between the bound (\ref{PLDB}) and the minimum Lee distance of optimal codes we derived is at most $1.$
\begin{table}[b]
    \vspace{-10ex}
    \begin{center}
        \caption{All optimal linear codes for $k_1=2, k_2=0$.}
    \label{tab2}
    \begin{tabular}{|c||c|c|c|c|c|}
	\hline
	n & \textbf{15m} & 15m+1 & 15m+2 & 15m+3 & \textbf{15m+4}\\
	\hline
	PLDB & \textbf{16m} & 16m+1 & 16m+2 & 16m+3 & \textbf{16m+4}\\
	\hline
	OPTIMAL &\textbf{16m} & 16m & 16m+1 &16m+2 &\textbf{16m+4}\\
	\hline\hline
	n & 15m+5 & \textbf{15m+6} & 15m+7 & \textbf{15m+8} & 15m+9\\
	\hline
	PLDB & 16m+5 & \textbf{16m+6} & 16m+7 & \textbf{16m+8} & 16m+9\\
	\hline
	OPTIMAL &16m+4 & \textbf{16m+6} & 16m+6 &\textbf{16m+8} &16m+8\\
	\hline\hline
	n & \textbf{15m+10} & 15m+11 & \textbf{15m+12} & \textbf{15m+13} & \textbf{15m+14}\\
	\hline
	PLDB & \textbf{16m+10} & 16m+11 & \textbf{16m+12} & \textbf{16m+13} & \textbf{16m+14}\\
	\hline
	OPTIMAL &\textbf{16m+10} & 16m+10 & \textbf{16m+12} & \textbf{16m+13}& \textbf{16m+14}\\
	\hline
\end{tabular}
\end{center}
\footnotesize
\begin{center}
\vspace{-1ex}
{PLDB: Plotkin-type Lee distance bound (\ref{PLDB}), \textbf{bolded text:} Plotkin-optimal.\par}    
\end{center}

\end{table}
\newpage
\begin{rem}
\emph{In Table \ref{tab2}, we obtained the OPTIMAL codes after proving the non-existence of Plotkin-optimal codes. To prove the nonexistence of Plotkin-optimal linear codes $C$ of length $n\equiv 1,3,5\; (\mathrm{mod}\;15)$, we consider the refined Plotkin bound for binary codes \cite{Huffman} and apply it for $\Phi(C)$, where $\Phi$ is the Gray map defined in \cite{Hammons1994}. It is not hard to prove the nonexistence of Plotkin-optimal linear codes of length $n\equiv 2,7,9,11\; (\mathrm{mod}\;15)$, although the proof is too cumbersome to display here.}
\end{rem}
\begin{table*}[h]
\vspace{-2ex}
\caption{New Plotkin-Optimal and Good Linear Codes for $k_1=2,k_2=0,2\leq n\leq 61$.}
\vspace{-1.5ex}
\label{tab1}\tabcolsep3.3pt\renewcommand{\arraystretch}{0.92}
\begin{center}
    \begin{tabular}{|c|c|c|c|c|c||c|c|c|c|c|c|}
	\hline
	n & Note &DB & Good & PLDB & LDB & n & Note &DB & Good & PLDB & LDB\\
	\hline
	2 &Known \cite{Aydin2}& 1  & 1      & 2  & 1
	&\textbf{32}&\textbf{L \ref{konstruk1}} &\textbf{32} & \textbf{33} & \textbf{34} & \textbf{61} \\
	\hline
	3 &Known \cite{Aydin2}& 2  & 2      & 3  & 3
	&33 &L \ref{konstruk1} &34 & 34     & 35 & 63 \\
	\hline
	\textbf{\textit{4}} &\textbf{\textit{Known \cite{Aydin2}}}& \textbf{\textit{4}}  & \textbf{\textit{4}}    & \textbf{\textit{4}}  & \textbf{\textit{5}}
	&\textbf{\textit{34}} &\textbf{\textit{C \ref{opt1}}}&\textbf{\textit{34}} &\textbf{\textit{36}} & \textbf{\textit{36}} & \textbf{\textit{65}}\\
	\hline
	5 &Known \cite{Aydin2}& 4  & 4      & 5  & 7
	&35 &L \ref{konstruk1} & 36 & 36     & 37 & 67\\
	\hline
	\textbf{\textit{6}} &\textbf{\textit{Known \cite{Dougherty2007}}} & \textbf{\textit{5}} & \textbf{\textit{6}}  & \textbf{\textit{6}}  & \textbf{\textit{9}}
	& \textbf{\textit{36}} &\textbf{\textit{C \ref{opt1}}}& \textbf{\textit{36}} & \textbf{\textit{38}} & \textbf{\textit{38}} & \textbf{\textit{69}} \\
	\hline
	7 &L \ref{konstruk1} & 6  & 6      & 7  & 11
	&37 &L \ref{konstruk1} & 38 & 38     & 39 & 71\\
	
	\hline
	\textbf{\textit{8}} &\textbf{\textit{C \ref{opt2}}} & \textbf{\textit{8}}  & \textbf{\textit{8}} & \textbf{\textit{8}}  & \textbf{\textit{13}}
	& \textbf{\textit{38}} &\textbf{\textit{C \ref{opt1}}}& \textbf{\textit{38}} & \textbf{\textit{40}} & \textbf{\textit{40}} & \textbf{\textit{73}}\\
	\hline
	9 &L \ref{konstruk1} & 8  & 8      & 9  & 15
	&39 &L \ref{konstruk1} & 40 & 40     & 41 & 75\\
	\hline
	\textbf{\textit{10}} &\textbf{\textit{C \ref{opt2}}} & \textbf{\textit{9}}  & \textbf{\textit{10}} & \textbf{\textit{10}}  & \textbf{\textit{17}}
	&\textbf{\textit{40}} &\textbf{\textit{C \ref{opt1}}}& \textbf{\textit{40}} & \textbf{\textit{42}} & \textbf{\textit{42}} & \textbf{\textit{77}}\\
	\hline
	11 &L \ref{konstruk1} & 10 & 10     & 11 & 19
	&41 &L \ref{konstruk1} &42 &42 &43 &79\\
	\hline
	\textbf{\textit{12}} &\textbf{\textit{C \ref{opt2}}} & \textbf{\textit{12}} & \textbf{\textit{12}} & \textbf{\textit{12}} & \textbf{\textit{21}}
	&\textbf{\textit{42}} &\textbf{\textit{C \ref{opt1}}}& \textbf{\textit{42}} & \textbf{\textit{44}} & \textbf{\textit{44}} & \textbf{\textit{81}}\\
	\hline
	\textit{\textbf{13}} &\textit{\textbf{\#}} & \textit{\textbf{12}} & \textit{\textbf{13}} & \textit{\textbf{13}} & \textit{\textbf{23}}
	&\textbf{\textit{43}} &\textbf{\textit{C \ref{opt1}}}& \textbf{\textit{44}} & \textbf{\textit{45}} & \textbf{\textit{45}} & \textbf{\textit{83}} \\
	\hline
	\textbf{\textit{14}} &\textbf{\textit{C \ref{opt2}}} & \textbf{\textit{13}} & \textbf{\textit{14}} & \textbf{\textit{14}} & \textbf{\textit{25}}
	&\textbf{\textit{44}} &\textbf{\textit{C \ref{opt1}}}& \textbf{\textit{44}} & \textbf{\textit{46}} & \textbf{\textit{46}} & \textbf{\textit{85}} \\
	\hline
	\textbf{\textit{15}} &\textbf{\textit{T \ref{konstruk}}} & \textbf{\textit{14}} &
	\textbf{\textit{16}} & \textbf{\textit{16}} & \textbf{\textit{27}}
	&\textbf{\textit{45}} &\textbf{\textit{C \ref{opt1}}}& \textbf{\textit{46}} & \textbf{\textit{48}} & \textbf{\textit{48}} & \textbf{\textit{87}}\\
	\hline
	16 &\# & 16 & 16     & 17 & 29
	&46 &L \ref{konstruk1}& 46 & 48 & 49 & 89\\
	\hline
	\textbf{17} &\textbf{L \ref{konstruk1}} & \textbf{16} & \textbf{17} & \textbf{18} & \textbf{31}
	&\textbf{47} &\textbf{L \ref{konstruk1}}& \textbf{48} & \textbf{49} & \textbf{50} & \textbf{91} \\
	\hline
	\textbf{18} &\textbf{L \ref{konstruk1}} & \textbf{17} & \textbf{18} &\textbf{19} &\textbf{33}
	&\textbf{48} &\textbf{L \ref{konstruk1}}& \textbf{48} & \textbf{50} & \textbf{51} & \textbf{93}\\
	\hline
	\textbf{\textit{19}} &\textbf{\textit{\textbf{\textit{C \ref{opt1}}}}} &\textbf{\textit{18}} & \textbf{\textit{20}} &\textbf{\textit{20}} & \textbf{\textit{35}}
	&\textbf{\textit{49}} &\textbf{\textit{C \ref{opt1}}}& \textbf{\textit{50}} & \textbf{\textit{52}} & \textbf{\textit{52}} & \textbf{\textit{95}}\\
	\hline
	20 &L \ref{konstruk1} & 20 & 20     & 21 & 37
	&\textbf{50} &\textbf{L \ref{konstruk1}} & \textbf{50} & \textbf{52} & \textbf{53} & \textbf{97}\\
	\hline
	\textbf{\textit{21}} &\textbf{\textit{C \ref{opt1}}}& \textbf{\textit{20}} & \textbf{\textit{22}} & \textbf{\textit{22}} & \textbf{\textit{39}}
	&\textbf{\textit{51}} &\textbf{\textit{C \ref{opt1}}}& \textbf{\textit{52}} & \textbf{\textit{54}} & \textbf{\textit{54}} & \textbf{\textit{99}}\\
	\hline
	\textbf{22} &\textbf{L \ref{konstruk1}} & \textbf{21} & \textbf{22} & \textbf{23} & \textbf{41}
	&\textbf{52} &\textbf{L \ref{konstruk1}} & \textbf{52} & \textbf{54} & \textbf{55} & \textbf{101}\\
	\hline
	\textbf{\textit{23} }&\textbf{\textit{C \ref{opt1}}}& \textbf{\textit{22}} & \textbf{\textit{24}} & \textbf{\textit{24}} & \textbf{\textit{43}}
	&\textbf{\textit{53}} &\textbf{\textit{C \ref{opt1}}} & \textbf{\textit{54}} & \textbf{\textit{56}} & \textbf{\textit{56}} & \textbf{\textit{103}}\\
	\hline
	24 &L \ref{konstruk1} & 24 & 24     & 25 & 45
	&\textbf{54} &\textbf{L \ref{konstruk1}} & \textbf{54}& \textbf{56} & \textbf{57} & \textbf{105}\\
	\hline
	\textbf{\textit{25}} &\textbf{\textit{C \ref{opt1}}}& \textbf{\textit{24}} & \textbf{\textit{26}} & \textbf{\textit{26}} & \textbf{\textit{47}}
	&\textbf{\textit{55}} &\textbf{\textit{C \ref{opt1}}} & \textbf{\textit{56}} & \textbf{\textit{58}} & \textbf{\textit{58}} & \textbf{\textit{107}}\\
	\hline
	\textbf{26} &\textbf{L \ref{konstruk1}} & \textbf{25}
	& \textbf{26} & \textbf{27} & \textbf{49}
	&\textbf{56} &\textbf{L \ref{konstruk1}}& \textbf{56} & \textbf{58} & \textbf{59} & \textbf{109}\\
	\hline
	\textbf{\textit{27}} &\textbf{\textit{C \ref{opt1}}}& \textbf{\textit{26}} & \textbf{\textit{28}} & \textbf{\textit{28}} & \textbf{\textit{51}}
	&\textbf{\textit{57}} &\textbf{\textit{C \ref{opt1}}}& \textbf{\textit{58}} & \textbf{\textit{60}} & \textbf{\textit{60}} & \textbf{\textit{111}}\\
	\hline
	\textbf{\textit{28}} &\textbf{\textit{C \ref{opt1}}}& \textbf{\textit{28}}& \textbf{\textit{29}} & \textbf{\textit{29}} & \textbf{\textit{53}}
	&\textbf{\textit{58}} &\textbf{\textit{C \ref{opt1}}}& \textbf{\textit{60}} & \textbf{\textit{61}} & \textbf{\textit{61}} & \textbf{\textit{113}}\\
	\hline
	\textbf{\textit{29}} &\textbf{\textit{C \ref{opt1}}}& \textbf{\textit{30}} & \textbf{\textit{30}}     & \textbf{\textit{30}} & \textbf{\textit{55}}
	&\textbf{\textit{59}} &\textbf{\textit{C \ref{opt1}}}& \textbf{\textit{60}} & \textbf{\textit{62}} & \textbf{\textit{62}} & \textbf{\textit{115}}\\
	\hline
	\textbf{\textit{30}} &\textbf{\textit{C \ref{opt1}}} &\textbf{\textit{30}} & \textbf{\textit{32}} & \textbf{\textit{32}} & \textbf{\textit{57}}
	&\textbf{\textit{60}} &\textbf{\textit{C \ref{opt1}}}& \textbf{\textit{62}} & \textbf{\textit{64}} & \textbf{\textit{64}} & \textbf{\textit{117}} \\
	\hline
	31 &L \ref{konstruk1} & 32 & 32     & 33 & 59
	&61 &L \ref{konstruk1} & 64 & 64     & 65 & 119\\
	\hline
\end{tabular}
\end{center}
\vspace{-1ex}
\footnotesize
{DB: The highest minimum Lee distance among all existing linear codes of length $n$ in the database \cite{Aydin2}, \\C: Corollary, T: Theorem, L: Lemma, PLDB: Plotkin-type Lee distance bound (\ref{PLDB}), LDB: Singleton-type Lee distance bound (\ref{LDB1}), \textbf{\textit{bold-italic text:}} Plotkin-optimal, \textbf{bolded text}: New good linear code,\\ \#: Constructed by adding or removing column(s) from the
generator matrix of the nearest optimal code.\par}
\end{table*}
\vspace{-0.2ex}
\subsection{Optimal codes for $k_1=3, ~k_2=0.$}
In this subsection, we use our methods to construct many new optimal and nearly optimal free linear codes which are unknown to exist before in the database of $\ZZ_4$ codes \cite{Aydin2}, as presented in Table \ref{tab3}. In many cases, the minimum Lee distance improved significantly. From Table \ref{tab3} and Lemma \ref{konstruk1}, we conclude that the difference between the bound (\ref{PLDB}) and the minimum Lee distance of optimal codes is at most 2.

\begin{table*}[h]
    \vspace{-2.5ex}
    \caption{New Plotkin-Optimal and Good Linear Codes for $k_1=3,k_2=0,3\leq n\leq 66$.}
    \vspace{-1.5ex}
    \label{tab3}
    \begin{center}
    \tabcolsep3.3pt\renewcommand{\arraystretch}{0.92}
    \begin{tabular}{|c|c|c|c|c|c||c|c|c|c|c|c|}
	\hline
	n & Note &DB & Good & PLDB & LDB & n & Note &DB & Good & PLDB & LDB \\
	\hline
	3 &Known \cite{Aydin2} & 1 & 1 & 3 & 1
	&\textbf{35} &\textbf{L \ref{konstruk1}} & \textbf{28} & \textbf{34} & \textbf{35} & \textbf{65}\\ \hline
	4 &Known \cite{Dougherty2007} & 1 & 2 & 4 & 3
	&\textbf{36} &\textbf{L \ref{konstruk1}} & \textbf{30} &\textbf{34}  & \textbf{36} & \textbf{67}\\ \hline
	5 &Known \cite{Aydin2} & 3 & 3 & 5 & 5
	&\textbf{37} &\textbf{L \ref{konstruk1}} & \textbf{30} &\textbf{36}  & \textbf{37} & \textbf{69}\\ \hline
	6 &Known \cite{Aydin2} & 4 &4  & 6 & 7
	&\textbf{38} &\textbf{L \ref{konstruk1}} & \textbf{31} & \textbf{36} & \textbf{38} & \textbf{71}\\ \hline
	7 &Known \cite{Aydin2} & 6 & 6 & 7 & 9
	&\textbf{39} &\textbf{L \ref{konstruk1}} & \textbf{32} & \textbf{38} & \textbf{39} & \textbf{73}\\ \hline
	8 &Known \cite{Aydin2} & 6 & 6 & 8 & 11
	&\textbf{\textit{40}} &\textbf{\textit{C \ref{opt2}}} & \textbf{\textit{32}} & \textbf{\textit{40}} & \textbf{\textit{40}} & \textbf{\textit{75}}\\ \hline
	\textbf{9} &\textbf{\#} & \textbf{6} & \textbf{7} & \textbf{9} & \textbf{13}
	&\textbf{41} &\textbf{L \ref{konstruk1}} & \textbf{34} & \textbf{40} & \textbf{41} & \textbf{77}\\ \hline
	\textbf{10} &\textbf{\#} & \textbf{6} &\textbf{8}  & \textbf{10} & \textbf{15}
	&\textbf{42} &\textbf{L \ref{konstruk1}} & \textbf{34} & \textbf{40} & \textbf{42} & \textbf{79}\\ \hline
	\textbf{11} & \textbf{\#} & \textbf{8} & \textbf{9} & \textbf{11} & \textbf{17}
	&\textbf{43} &\textbf{L \ref{konstruk1}} & \textbf{34} &\textbf{42}  & \textbf{43} & \textbf{81}\\ \hline
	12 &Known \cite{Aydin2} & 10 & 10 & 12 & 19
	&\textbf{\textit{44}} &\textbf{\textit{C \ref{opt2}}} & \textbf{\textit{36}} & \textbf{\textit{44}} & \textbf{\textit{44}} & \textbf{\textit{83}} \\ \hline
	13 &Known \cite{Aydin2} & 11 & 11 & 13 & 21
	&\textbf{45} &\textbf{L \ref{konstruk1}} & \textbf{38} & \textbf{44} & \textbf{45} & \textbf{85}\\ \hline
	14 &L \ref{konstruk1} & 12 & 12 & 14 & 23
	&\textbf{46} &\textbf{L \ref{konstruk1}} & \textbf{38} &\textbf{44}  & \textbf{46} & \textbf{87}\\ \hline
	\textbf{15} &\textbf{\#} & \textbf{10} & \textbf{14}  & \textbf{15} & \textbf{25}
	&\textbf{47} &\textbf{L \ref{konstruk1}} & \textbf{39} &\textbf{46}  & \textbf{47} & \textbf{89} \\ \hline
	\textbf{\textit{16}} &\textbf{\textit{C \ref{opt3}}} & \textbf{\textit{12}} & \textbf{\textit{16}} & \textbf{\textit{16}} & \textbf{\textit{27}}
	&\textbf{\textit{48}} &\textbf{\textit{C \ref{opt2}}} & \textbf{\textit{40}} & \textbf{\textit{48}} & \textbf{\textit{48}} & \textbf{\textit{91}} \\ \hline
	\textbf{17} &\textbf{\#} & \textbf{12} & \textbf{16} & \textbf{17} & \textbf{29}
	&\textbf{49} &\textbf{L \ref{konstruk1}} & \textbf{42} & \textbf{48} & \textbf{49} & \textbf{93}\\ \hline
	\textbf{18} &\textbf{\#} & \textbf{12} & \textbf{16} & \textbf{18} & \textbf{31}
	&\textbf{50} &\textbf{\#} & \textbf{44} & \textbf{49} & \textbf{50} & \textbf{95} \\ \hline
	\textbf{19} &\textbf{\#} & \textbf{14} & \textbf{18}  & \textbf{19} & \textbf{33}
	&\textbf{51} &\textbf{L \ref{konstruk1}} & \textbf{44} &\textbf{50}  & \textbf{51} & \textbf{97}\\
	\hline
	\textbf{20} &\textbf{L \ref{konstruk1}} & \textbf{16} & \textbf{18} & \textbf{20} & \textbf{35}
	&\textbf{\textit{52}}&\textbf{\textit{C \ref{opt6}}} & \textbf{\textit{44}} & \textbf{\textit{52}} & \textbf{\textit{52}} & \textbf{\textit{99}}\\
	\hline
	\textbf{21} &\textbf{\#} & \textbf{18} & \textbf{20} & \textbf{21} & \textbf{37}
	&\textbf{53} &\textbf{L \ref{konstruk1}} & \textbf{45} & \textbf{52} & \textbf{53} & \textbf{101}\\
	\hline
	\textbf{22} &\textbf{L \ref{konstruk1}} & \textbf{17} & \textbf{20} & \textbf{22} & \textbf{39}
	&\textbf{\textit{54}} &\textbf{\textit{C \ref{opt6}}}& \textbf{\textit{45}} & \textbf{\textit{54}} & \textbf{\textit{54}} & \textbf{\textit{103}}\\
	\hline
	\textbf{23} &\textbf{L \ref{konstruk1}} & \textbf{18} & \textbf{22} & \textbf{23} & \textbf{41}
	&\textbf{55} &\textbf{L \ref{konstruk1}} & \textbf{47} &\textbf{54}  & \textbf{55} & \textbf{105}\\
	\hline
	\textbf{\textit{24}} &\textbf{\textit{C \ref{opt3}}} & \textbf{\textit{20}} & \textbf{\textit{24}} & \textbf{\textit{24}} & \textbf{\textit{43}}
	&\textbf{\textit{56}} &\textbf{\textit{C \ref{opt6}}}& \textbf{\textit{47}} & \textbf{\textit{56}} & \textbf{\textit{56}} & \textbf{\textit{107}}\\
	\hline
	\textbf{25} &\textbf{\#} & \textbf{22} &\textbf{24}  & \textbf{25} & \textbf{45}
	&\textbf{57} &\textbf{L \ref{konstruk1}} & \textbf{48} &\textbf{56}  & \textbf{57} & \textbf{109}\\
	\hline
	\textbf{26} &\textbf{L \ref{konstruk1}} & \textbf{22} & \textbf{24} & \textbf{26} & \textbf{47}
	&\textbf{\textit{58}} &\textbf{\textit{C \ref{opt6}}}& \textbf{\textit{50}} & \textbf{\textit{58}} & \textbf{\textit{58}} & \textbf{\textit{111}}\\
	\hline
	\textbf{27} &\textbf{\#} & \textbf{22} &\textbf{26} & \textbf{27} & \textbf{49}
	&\textbf{59} &\textbf{L \ref{konstruk1}} & \textbf{50} &\textbf{58}  & \textbf{59} & \textbf{113}\\
	\hline
	\textbf{\textit{28}} &\textbf{\textit{\#}} & \textbf{\textit{23}} & \textbf{\textit{28}} & \textbf{\textit{28}} & \textbf{\textit{51}}
	&\textbf{\textit{60}} &\textbf{\textit{C \ref{opt6}}}& \textbf{\textit{50}} & \textbf{\textit{60}} & \textbf{\textit{60}} & \textbf{\textit{115}}\\
	\hline
	\textbf{29} &\textbf{\#} & \textbf{23} & \textbf{28} & \textbf{29} & \textbf{53}
	&\textbf{\textit{61}} &\textbf{\textit{C \ref{opt6}}}& \textbf{\textit{52}} & \textbf{\textit{61}} & \textbf{\textit{61}} & \textbf{\textit{117}}\\
	\hline
	\textbf{30} &\textbf{L \ref{konstruk1}} & \textbf{25} & \textbf{28} & \textbf{30} & \textbf{55}
	&\textbf{\textit{62}} &\textbf{\textit{C \ref{opt6}}}& \textbf{\textit{52}} & \textbf{\textit{62}} & \textbf{\textit{62}} & \textbf{\textit{119}}\\
	\hline
	\textbf{31} &\textbf{L \ref{konstruk1}} & \textbf{25} & \textbf{30} & \textbf{31} & \textbf{57}
	&\textbf{\textit{63}} &\textbf{\textit{T \ref{konstruk}}}& \textbf{\textit{53}} & \textbf{\textit{64}} & \textbf{\textit{64}} & \textbf{\textit{121}}\\
	\hline
	\textbf{\textit{32}} &\textbf{\textit{C \ref{opt2}}} & \textbf{\textit{26}} & \textbf{\textit{32}} & \textbf{\textit{32}} & \textbf{\textit{59}}
	&\textbf{64} &\textbf{L \ref{konstruk1}} & \textbf{54} & \textbf{64} & \textbf{65} & \textbf{123}\\
	\hline
	\textbf{33} &\textbf{L \ref{konstruk1}} & \textbf{28} & \textbf{32} & \textbf{33} & \textbf{61}
	&\textbf{65} &\textbf{L \ref{konstruk1}} & \textbf{42} & \textbf{64} & \textbf{66} & \textbf{125} \\
	\hline
	\textbf{34} &\textbf{L \ref{konstruk1}} & \textbf{28} & \textbf{32} & \textbf{34} & \textbf{63}
	&\textbf{66} &\textbf{L \ref{konstruk1}} & \textbf{44} &\textbf{65}  & \textbf{67} & \textbf{127}\\
	\hline
	\end{tabular}    
    \end{center}
    
	\vspace{-1.0ex}
	\footnotesize
	{DB: The highest minimum Lee distance among all existing linear codes of length $n$ in the database \cite{Aydin2}, \\C: Corollary, T: Theorem, L: Lemma, PLDB: Plotkin-type Lee distance bound (\ref{PLDB}), LDB: Singleton-type Lee distance bound (\ref{LDB1}), \textbf{\textit{bold-italic text:}} Plotkin-optimal, \textbf{bolded text}: New good linear code,\\ \#: Constructed by adding or removing column(s) from the
generator matrix of the nearest optimal code.\par}
\end{table*}
\newpage
\section{Concluding remarks}
In this work, we introduced several construction methods for linear codes over $\ZZ_4.$  By using these methods, we obtained many new optimal linear codes as well as new linear codes with the highest known minimum distances compared with the known codes available in the database of $\ZZ_4$ codes \cite{Aydin2009, Aydin2}, as presented in Table \ref{tab1} and Table \ref{tab3}. We can apply the Gray map defined in \cite{Hammons1994} to our new Plotkin-optimal linear codes to obtain many binary codes that are also Plotkin-optimal and $\ZZ_4$-linear, which is interesting to investigate further.

Regarding the existence of linear codes with good minimum Lee distance, we believe in the following conjecture.

\begin{conj}
	There exist free linear codes $C$ over $\mathbb{Z}_4$ of parameters $[n,4^{k}2^0,d_L],$ with the minimum distance satisfying
	$\displaystyle d_L\geq \left \lfloor\frac{4^k}{4^k-1}n \right\rfloor-(ak+b)$
	for some constant $a, b$.
\end{conj}

\noindent For $a=1, b=0$, we notice that the conjecture holds for all $n$ when $k=2,3$. For $k>3$, it holds for $n\equiv m \pmod {(4^k-1)},$ with $0\leq m\leq k+1$ and $3\cdot 4^{k-1}\leq m< 4^k-1$.

We are now working on a generalization of our observation here to the linear codes over the ring $\ZZ_{2^r}.$ So far, we succeed to prove that several properties here also hold for codes over the ring $\ZZ_{2^r},$ with $r \in \ZZ^+.$ As an example, we can obtain a similar bound as in Theorem \ref{Plotkin} for linear codes over $\ZZ_{2^r}$. This results, which is now in preparation, will be published elsewhere in a separate paper.



\begin{thebibliography}{99}

\bibitem{Abualrub2004b}
T. Abualrub, A. Ghrayeb, and R. H. Oehmke, "A mass formula and rank of $\ZZ_4$ cyclic codes of length $2^e,$"
\emph{IEEE Trans. Inform. Theory} {\bf 50}(12) (2004), 3306-3312.

\bibitem{Abualrub2003}
T. Abualrub and R. H. Oehmke, "Cyclic codes of length $2^e$ over $\ZZ_4$,"
\emph{Discr. Appl. Math.} {\bf 128}(1 SPEC.) (2003), 3-9.



\bibitem{Aydin2009}
N. Aydin and T. Asamov, "A database of $\ZZ_4$ codes," \emph{J. of Combinatorics, Information \& System Sciences} {\bf 34}(1-4) (2009), 1-12.

\bibitem{Aydin2}
N. Aydin and T. Asamov, "Online database of $\ZZ_4$ codes," available at
http://quantumcodes.info/Z4 (accessed at February 5, 2022).

\bibitem{Blackford2003a}
T. Blackford, "Cyclic codes over $\ZZ_4$ of oddly even length,"
\emph{Discr. Appl. Math.} {\bf 128}(1 SPEC.) (2003), 27-46.

\bibitem{Blackford2003b}
T. Blackford, "Negacyclic codes over $\ZZ_4$ of even length,"
\emph{IEEE Trans. Inform. Theory} {\bf 49}(6) (2003), 1417-1424.

\bibitem{Blake1}
I. F. Blake, "Codes over certain rings," \emph{Inf. Control} {\bf 20} (1972), 396-404.

\bibitem{Blake2}
I. F. Blake, "Codes over integer residue rings," \emph{Inf. Control} {\bf 29} (1975), 295-300.





\bibitem{Bonnecaze1}
A. Bonnecaze and I. Duursma, "Translate of linear codes over $\ZZ_4,$"
\emph{IEEE Trans. Inform. Theory} {\bf 43}(4) (1997), 1218-1230.

\bibitem{Bustomi21}
Bustomi, A. P. Santika, and D. Suprijanto, "Linear codes over the ring $\ZZ_4+u\ZZ_4+v\ZZ_4+w\ZZ_4+uv\ZZ_4+uw\ZZ_4+vw\ZZ_4+uvw\ZZ_4,$" 
\emph{IAENG Int. J. Comput. Science} {\bf 43}(8) (2021), 11 pages.

\bibitem{Byrne2013}
E. Byrne, M. Greferath, J. Pernas, and J. Zumbr\"{a}gel,
"Algebraic decoding of negacyclic codes over $\ZZ_4,$"
\emph{Des. Codes Cryptogr.} {\bf 66}(1-3) (2013), 3-16.

\bibitem{Dougherty2007}
S. T. Dougherty, T. A. Gulliver, Y. H. Park, and J. N. C. Wong, "Optimal linear codes over $\ZZ_m,$"
\emph{J. Korean Math. Soc.} {\bf 44}(5) (2007), 1139-1162.

\bibitem{Dougherty2006}
S. T. Dougherty and S. Ling, "Cyclic codes over $\ZZ_4$ of even length,"
\emph{Des. Codes Cryptogr.} {\bf 39}(2) (2006), 127-153.

\bibitem{Dougherty2001}
S. T. Dougherty and K. Shiromoto, "Maximum distance codes over rings of order $4,$"
\emph{IEEE Trans. Inform. Theory} {\bf 47}(1) (2001), 400-404.



\bibitem{Gulliver2007}
T. A. Gulliver and J. N. C. Wong, "Classification of optimal linear $\ZZ_4$ rate $1/2$ codes of length $\geq 8,$"
\emph{Ars Combin.} {\bf 85} (2007), 287-306

\bibitem{Hammons1994}
A. R. Hammons, P. V. Kumar, A. R. Calderbank, N. J. A. Sloane, and P. Sol\'{e},
"The $\ZZ_4$-linearity of Kerdock, Preparata, Goethals and related codes,"
\emph{IEEE Trans. Inform. Theory} {\bf 40}(2) (1994), 301-319.


\bibitem{Huffman2005}
W. C. Huffman, "On the classification and enumeration of self-dual codes,"
\emph{Finite Fields Appl.} {\bf 11}(3) (2005), 451-490.

\bibitem{Huffman}
W. C. Huffman and V. Pless, \emph{Fundamentals of Error Correcting Codes,} Cambridge University Press, 2003.


\bibitem{Klove}
T. Kl{\o}ve, "Support weight distribution of linear codes," \emph{Discrete Math.} {\bf 106-107} (1992), 311-316.


\bibitem{Pless2} V. S. Pless and Z. Qiang, "Cyclic codes and quadratic residue codes over $\ZZ_4,$" \emph{IEEE Trans. Inform. Theory}
{\bf 42}(5) (1996), 1594-1600.

\bibitem{Rains1999}
E. M. Rains, "Optimal self-dual codes over $\ZZ_4,$"
\emph{Discrete Math.} {\bf 203}(1-3) (1999), 215-228.

\bibitem{Spiegel1}
E. Spiegel, "Codes over $\ZZ_m$," \emph{Inf. Control} {\bf 35} (1977), 48-51.

\bibitem{Spiegel2}
E. Spiegel, "Codes over $\ZZ_m,$ revisited," \emph{Inf. Control} {\bf 37} (1978), 100-104.

\bibitem{Vega2004}
G. Vega and J. Wolfmann, "Some families of $\ZZ_4$-cyclic codes,"
\emph{Finite Fields Appl.} {\bf 10}(4) (2004), 530-539.

\bibitem{Wan}
Z. Wan, \emph{Quaternary codes,} World Scientific, 1997.



\bibitem{Wyner1968}
A. D. Wyner and R. L. Graham, "An upper bound on minimum distance for a $k$-ary code,"
\emph{Inf. Control} {\bf 13}(1) (1968), 46-52.


\bibitem{Zinoviev}
V. A. Zinoviev and D. V. Zinoviev, "On the generalized concatenated construction
for codes in $L_1$ and Lee metrics," \emph{Probl. Inf. Transm.} {\bf 57}(1) (2021), 70-83.


\end{thebibliography}

\end{document}